\theoremstyle{definition}
\newtheorem{mechanism}{Mechanism}
\newtheorem{definition}{Definition}
\theoremstyle{remark}
\theoremstyle{plain}
\newtheorem{theorem}{Theorem}
\newtheorem{proposition}{Proposition}
\newcommand{\tmop}[1]{\ensuremath{\operatorname{#1}}}
\begin{document}
		%
		\title{EPViSA: Efficient Auction Design for Real-time Physical-Virtual Synchronization in the Metaverse}
		\author{Minrui Xu, Dusit Niyato, \emph{Fellow, IEEE}, Benjamin Wright, Hongliang Zhang, Jiawen Kang, Zehui Xiong,\\ Shiwen Mao, \emph{Fellow, IEEE}, and Zhu Han, \emph{Fellow, IEEE}
		\thanks{Minrui~Xu and Dusit~Niyato are with the School of Computer Science and Engineering, Nanyang Technological University, Singapore 308232, Singapore (e-mail: minrui001@e.ntu.edu.sg; dniyato@ntu.edu.sg).}
        \thanks{Benjamin Wright is with the Department of Mathematics, Massachusetts Institute of Technology, Cambridge, MA 02139 USA (e-mail: bpwright@mit.edu).}
        \thanks{Hongliang~Zhang is with the Department of Electrical and Computer Engineering, Princeton University, Princeton, NJ 08544 USA (e-mail: hongliang.zhang92@gmail.com).}
		\thanks{Jiawen~Kang is with the School of Automation, Guangdong University of Technology, China (e-mail: kavinkang@gdut.edu.cn).}
		\thanks{Zehui~Xiong is with the Pillar of Information Systems Technology and Design, Singapore University of Technology and Design, Singapore 487372, Singapore (e-mail: zehui\_xiong@sutd.edu.sg).}
		\thanks{Shiwen~Mao is with the Department of Electrical and Computer Engineering, Auburn University, Auburn, AL 36849-5201 USA (email: smao@ieee.org).}
		\thanks{Zhu~Han is with the Department of Electrical and Computer Engineering, University of Houston, Houston, TX 77004 USA, and also with the Department of Computer Science and Engineering, Kyung Hee University, Seoul 446-701, South Korea (e-mail: zhan2@uh.edu).}
	}
		\maketitle
		
		\begin{abstract}
			Metaverse can obscure the boundary between the physical and virtual worlds. Specifically, for the Metaverse in vehicular networks, i.e., the vehicular Metaverse, vehicles are no longer isolated physical spaces but interfaces that extend the virtual worlds to the physical world. Accessing the Metaverse via autonomous vehicles (AVs), drivers and passengers can immerse in and interact with 3D virtual objects overlaying views of streets on head-up displays (HUD) via augmented reality (AR). The seamless, immersive, and interactive experience rather relies on real-time multi-dimensional data synchronization between physical entities, i.e., AVs, and virtual entities, i.e., Metaverse billboard providers (MBPs). However, mechanisms to allocate and match synchronizing AV and MBP pairs to roadside units (RSUs) in a synchronization service market, which consists of the physical and virtual submarkets, are vulnerable to adverse selection. In this paper, we propose an enhanced second-score auction-based mechanism, named EPViSA, to allocate physical and virtual entities in the synchronization service market of the vehicular Metaverse. The EPViSA mechanism can determine synchronizing AV and MBP pairs simultaneously while protecting participants from adverse selection and thus achieving high total social welfare. We propose a synchronization scoring rule to eliminate the external effects from the virtual submarkets. Then, a price scaling factor is introduced to enhance the allocation of synchronizing virtual entities in the virtual submarkets. Finally, rigorous analysis and extensive experimental results demonstrate that EPViSA can achieve at least 96\% of the social welfare compared to the omniscient benchmark while ensuring strategy-proof and adverse selection free through a simulation testbed.

		\end{abstract}
		\begin{IEEEkeywords}
Metaverse, digital twin, augmented reality, market design, auction theory.
\end{IEEEkeywords}

		%
		\IEEEpeerreviewmaketitle

		\section{Introduction}

		\IEEEPARstart{M}{etaverse}, the immersive 3D Internet, is regarded as an advanced stage and the long-term vision of digital transformation~\cite{metaverse,lee2021all}. As an exceptional multi-dimensional and multi-sensory communication medium~\cite{mann2018all}, the Metaverse overcomes the tyranny of distance by enabling participants in different physical locations to be telepresent and interact with each other in a shared 3D virtual world~\cite{duan2021metaverse}. The synergy between the Metaverse and autonomous transportation systems, i.e., the vehicular Metaverse~\cite{jiang2021reliable}, allows drivers and passengers to access the virtual world of the Metaverse in vehicles and offers a futuristic way of traveling. The on-road immersive experience of the vehicular Metaverse relies on the persistent and bidirectional synchronization in real-time between entities in physical transportation systems and the virtual world~\cite{han2022dynamic}.
		
		The vehicular Metaverse is expected to lead an evolution of intelligent transportation systems~\cite{xiang2022comparative, zhan2021survey, zhan2020learning, le2022survey}, where ubiquitous physical and virtual entities, e.g., roadside units (RSUs), autonomous vehicles (AVs), and Metaverse billboard providers (MBPs), are interconnected. The physical-virtual synchronization system in the vehicular Metaverse is maintained from two directions. First, during the physical-to-virtual (P2V) synchronization, AVs update their digital twins (DTs) in the virtual worlds continuously by sensing surrounding environments (e.g., landmarks and landscapes) with internal and external sensors~\cite{zhang2022toward} as well as executing their DT tasks~\cite{schwarz2022role, wang2020architectural, mozaffari2020deep}, including real-time vehicular status, historical sensing data, and bio-data of passengers, in the virtual world. Then, AVs share the \textit{preference caches}\footnote{Preference caches store the intersection of AVs' historical sensing data and passengers' context-aware preferences and needs during the journey.} in the DTs to MBPs via RSUs. Finally, during the virtual-to-physical (V2P) synchronization, MBPs push personalized augmented reality (AR)-based recommendations and ads~\cite{luettin2019future} to AVs and display on head-up displays (HUDs)~\cite{wang2020augmented}. This bidirectional synchronization can provide the utmost safety and immersive experience to drivers and passengers. 
		However, tremendous communication and computing resources might be consumed to facilitate real-time physical-virtual synchronization in the vehicular Metaverse~\cite{xu2022full}, including maintaining DT models of AVs and delivering virtual billboards of MBPs.
		Fortunately, RSUs, equipped with high-capacity communication and computing resources, can provide synchronization services to AVs and MBPs~\cite{dai2021asynchronous}. Therefore, synchronizing AVs and MBPs can improve their immersive experience by offloading resource-intensive DT and AR tasks to RSUs. Nevertheless, RSUs are expected to require compensation since maintaining synchronization services will consume RSUs' limited resources. Hence, a real-time synchronization service market should be designed to motivate RSUs with monetary incentives to share their resources by allocating and charging the synchronizing AV and MBP pairs in the vehicular Metaverse.
		\begin{figure*}
			\centering
			\includegraphics[width=1\linewidth]{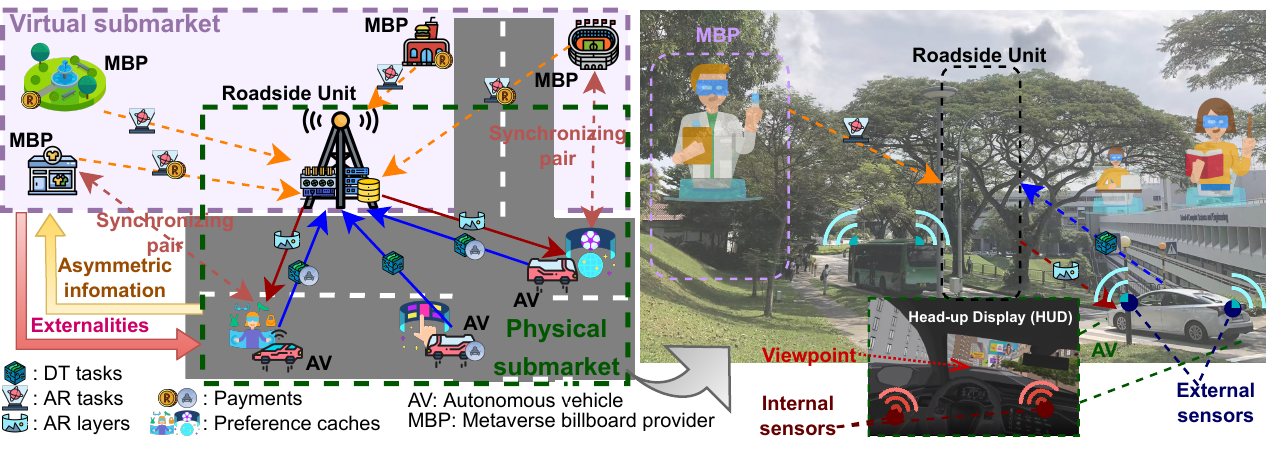}
			\caption{The physical-virtual synchronization system in the vehicular Metaverse.}
			\label{fig:system}
		\end{figure*}
		In literature, many efforts have been devoted to designing auction-based mechanisms in the synchronization service market since auctions can effectively deal with private information while not introducing an excessive latency~\cite{tang2017momd}. However, existing solutions have limitations and are unsuitable for real-time physical-virtual synchronization~\cite{ismail2022semantic} as they allocate synchronizing physical or virtual entities separately. As a result, they introduce unnecessary delays~\cite{xu2022lsync} during the physical-virtual synchronization of AVs and MBPs via RSUs that degrades the seamless experience perceived by passengers being in the Metaverse~\cite{chen2018virtual}. 
		Moreover, they primarily assume that there are no external effects (or externalities)~\cite{zhang2021privacy} from the synchronizing virtual entities during the allocation of physical entities in the P2V synchronization~\cite{du2021optimal}. However, the synchronizing virtual entities, e.g., MBPs, may introduce externalities to the allocation of AVs as MBPs' recommendations and ads are diverse in match qualities and render delays. These unpredictable qualities and delays introduce externalities and expose AVs to \textit{adverse selection}\footnote{Adverse selection~\cite{akerlof1978market} refers to a market situation that participants with asymmetric information are only willing to pay at the average market price. This might introduce an inefficient allocation outcome of the market~\cite{arnosti2016adverse}.} that AVs demand RSUs to set a prefixed threshold of synchronization delays~\cite{hui2022collaboration} to minimize externalities.
		Furthermore, the existing solutions mostly assume that virtual entities have a symmetric observation of the physical world~\cite{xu2022wireless}. In contrast, the match qualities of recommendations and ads might not be immediately measured by every MBP during the V2P synchronization. Under such an asymmetric information regime, MBPs without immediate feedback cannot adjust their bids accordingly and thus are exposed to adverse selection. In this way, they might lose the opportunities valued to them most.
		Consequently, the existing solutions may expose AVs and MBPs to adverse selection, and the allocation efficiency in the synchronization market may deteriorate.

		In this paper, to address the problem of adverse selection and improve allocation efficiency, we propose an \textbf{E}nhanced second-score auction-based \textbf{P}hysical-\textbf{Vi}rtual \textbf{S}ynchroniz\textbf{A}tion mechanism (EPViSA) to allocate and match physical and virtual entities simultaneously in the synchronization service market of the vehicular Metaverse. The EPViSA mechanism determines the set of synchronizing AVs and MBPs in the physical and virtual submarkets, respectively. 
		In the physical submarket for the P2V synchronization, to address the adverse selection problem of AVs, we propose a synchronization scoring rule that incorporates information from MBPs in the virtual submarket. The efficient synchronization scoring rule allows the RSUs to adjust processing deadlines flexibly according to the requirements and preferences of AVs. Additionally, in the virtual submarket for the V2P synchronization, the enhanced allocation and the pricing rules with a price scaling factor are proposed to allocate and price the synchronizing MBPs with high-quality recommendations and ads. This way, the problem of asymmetric information among MBPs is alleviated, and thus the surplus in the virtual submarket increases.
		To demonstrate the allocation efficiency of the EPViSA mechanism, we first provide an omniscient mechanism and a second-price auction-based mechanism as benchmarks~\cite{arnosti2016adverse}. We analyze and prove a lower bound on the allocation efficiency of the proposed mechanisms compared to the omniscient benchmark. Moreover, we show that the proposed EPViSA mechanism can achieve at least 96\% of the social welfare in the synchronization market compared with the omniscient benchmark. Furthermore, a simulation testbed of the vehicular Metaverse is implemented to collect real-world data samples, e.g., opinion scores and eye fixation duration, during users watch Metaverse billboards.

		Our main contributions can be summarized as follows:
		\begin{itemize}
			\item We propose a new physical-virtual synchronization system for the vehicular Metaverse, where RSUs provide real-time synchronization services to help AVs synchronize with the virtual worlds by executing DT tasks and help MBPs synchronize with the physical world by rendering AR recommendations and ads.

			\item In the proposed system, we design a real-time synchronization service market where service providers, i.e., RSUs, allocate and match the synchronizing pairs of AVs and MBPs, whose surpluses are positively correlated in the physical and the virtual submarkets.
			\item To resolve the adverse selection problem in existing synchronization mechanisms, we propose the EPViSA mechanism to match and allocate AVs and MBPs for RSUs simultaneously. In the physical world, the synchronization scoring rule is proposed for synchronizing AV allocation to eliminate the external effect from the virtual world. In the virtual world, the pricing factor is adopted to mitigate the asymmetric information in matching qualities of MBPs' recommendations and ads.
			\item To evaluate the performance of the proposed mechanism, extensive analysis and experiments are conducted. We implement a simulation testbed of the vehicular Metaverse to collect real-world data samples for validating the proposed mechanisms. The analysis and experimental results show that the EPViSA mechanism can achieve at least 96\% of the allocation efficiency in the synchronization market while guaranteeing fully strategy-proof and adverse-selection free.
		\end{itemize}
		
		The rest of this paper is organized as follows. In Section~\ref{sec:related}, we provide a review of the related work. In Section~\ref{sec:system}, we describe the synchronization system and market. In Section~\ref{sec:mechanism}, we provide the implemented details of the proposed synchronization mechanism. In Section~\ref{sec:enhanced}, we propose the enhanced synchronization mechanism. We illustrate the simulation experiments in Section~\ref{sec:results}, and finally conclude in Section~\ref{sec:conclusions}.
		
		\section{Related Work}\label{sec:related}
		\subsection{Metaverse}
		Since the term ``Metaverse" was introduced by Neal Stephenson in his \textit{Snow Crash} 30 years ago, it has raised a continuous exploration of human beings in its potential impacts on the economy, productivity, and society~\cite{wang2022survey}. The Metaverse is expected to provide extended reality and extended intelligence in all daily activities, including entertainment, work, and socializing. To put the concept of the Metaverse into practice, Duan~\textit{et al.}~\cite{duan2021metaverse} implemented a Metaverse prototype of a university campus for virtual education~\cite{ng2022unified} based on blockchain~\cite{hong2022scaling, hong2022cycle} and immersive human-computer interfaces. To clarify the demands and questions in the current development of the Metaverse, they proposed a three-layer architecture for the Metaverse, which consists of the ecosystem, interaction, and infrastructure. Specifically, in the interaction layer of their architecture, DT and immersive user experience are indispensable to support the synchronization among entities in the physical and virtual worlds.

		\subsection{Physical-Virtual Synchronization in Vehicular Metaverse}
		The immersive experience of passengers in the vehicular Metaverse relies on the real-time synchronization among physical entities, e.g., AVs~\cite{hossain2022ai, hu2021uav, ni2019toward}, and virtual entities, e.g., MBPs~\cite{xu2022secure,zhang2020machine,zhang2022intelligent}. Aiming to relieve the computation burden on AVs during synchronization from the physical world to the virtual world, Jiang \textit{et al.}~\cite{jiang2021reliable} investigated the application of coded distributed computing (CDC) in the vehicular Metaverse with a hierarchical decision-making framework. Hui \textit{et al.}~\cite{hui2022collaboration} demonstrated that maintaining real-time DT models for interconnected AVs can improve the efficiency of intelligent transportation systems~\cite{liu2021smart} by making collaborative decisions in the virtual world. Additionally, to synchronize the virtual world with physical vehicles, Wang \textit{et al.}~\cite{wang2020augmented} designed Metaverse billboards by leveraging AR as the interface that overlays the recommendations and ads on the field-of-view of passengers through the windshield. Through real-time bidirectional synchronization between physical and virtual worlds, i.e., maintaining DT models, and from virtual to physical worlds, i.e., delivering AR recommendations and ads, the vehicular Metaverse is regarded as an enabler of future transportation systems.

		\subsection{Mechanism Design for Metaverse Synchronization Market}
		
		Several existing works have been done to design mechanisms for Metaverse service markets, with the aim of providing allocation and pricing mechanisms for physical and virtual services in the Metaverse. For example, Ismail \textit{et al.}~\cite{ismail2022semantic} propose a Vickrey–Clarke–Groves (VCG) auction-based mechanism for IoT devices to sell their semantic information to virtual service providers in the Metaverse that is responsible for constructing and maintaining the Metaverse. In addition, Xu \textit{et al.}~\cite{xu2022wireless} propose a double Dutch auction-based VR services allocation and pricing mechanism for physical VR service providers to provide their services to VR users.
		However, their current works generally focus on a unilateral market for physical or virtual services. In our work, we propose a family of auction-based physical-virtual synchronization mechanisms to simultaneously allocate and match the synchronizing AV and MBP pairs in the physical and virtual submarkets. Different from existing synchronization mechanisms, the proposed EPViSA mechanism can achieve real-time and efficient allocation and match for synchronizing AV and MBP pairs while ensuring several promising economic properties, such as strategy-proofness and adverse selection free.
		\begin{table}[t]
			\setlength{\abovecaptionskip}{5pt}
			\setlength{\belowcaptionskip}{5pt}
			\renewcommand{\arraystretch}{1.3}
			\caption{List of notations.}
			\label{table1}
			\centering
			\begin{tabular}{c || p{6.2cm}}
				\toprule
				Notation & Description     \\
				\hline
				$\mathcal{I}, \mathcal{J}, \mathcal{K}$         & Sets of AVs, RSUs, and MBPs.           \\
				$f^C_j, f^G_j$ &The CPU and GPU frequencies of RSU $j$.\\
				$B_j,B^u_j,B^d_j$ & The total, uplink, and downlink bandwidths of RSU $j$.\\
				$\nu_i$ & The valuation of the DT task of AV $i$.\\
				$m_{i,k}$ & The match quality of MBP $k$ on AV $i$.\\
				$Q_{i,k}$ & The valuation of ads of MBP $k$ on AV $i$.\\
				$C_i$ & The size of preference caches of AV $i$.\\
				$h_{i,k}$ & The number of matched preference caches of AV $i$ by MBP $k$.\\
				$P_j, p_i$ & The transmit power of RSU $j$ and AV $i$.\\
				$g_{i,j}$ & The channel gain between RSU $j$ and AV $i$.\\
				$R^u_{i,j}, R^d_{i,j}$ & The uplink and downlink transmission rates between AV $i$ and RSU $j$.\\
				$s_i^\emph{DT}, e_i^\emph{DT}, \eta_i$ & The size of data, the required CPU cycles per unit data, and the deadline to execute AV $i$'s DT task.\\
				$\eta_i$ & The deadline for accomplishing the DT task of AV $i$.\\
				$t_{i,j}^\emph{DT}, d_{i,j}^\emph{DT}$ & The transmission and computation delays to execute the DT task of AV $i$ at RSU $j$.\\
				$s_k^\emph{AR},e_k^\emph{AR}$ & The data size of the AR layer and the required GPU cycles per unit data of MBP $k$.\\
				$t_{i,j,k}^\emph{AR}, e_{i,j,k}^\emph{AR}$ & The transmission and computation delays of AR recommendations and ads of MBP $k$ on RSU $j$ to AV $i$.\\
				$T^\emph{total}$ & The synchronization delay.\\
				$W^\emph{DT}$ & The surplus of AVs.\\ 
				$W^\emph{AR}_B, W^\emph{AR}_P$ & The surplus of brand MBPs and performance MBPs.\\ 
				$b^\emph{DT}, b^\emph{AR}$ & The submitted bids of AVs and MBPs.\\
				$z^\emph{DT}, z^\emph{AR}$ & The allocation probabilities of AVs and MBPs.\\ 
				$p^\emph{DT}, p^\emph{AR}$ & The payments of AVs and MBPs.\\ 
				$\Phi^\emph{syn}$ & The synchronization scores of AVs.\\
				\bottomrule
			\end{tabular}
		\end{table}
		
		\section{The Synchronization System and Market}~\label{sec:system}
		In this section, we first present an overview of the synchronization system in the vehicular Metaverse and discuss the problem of adverse selection in the synchronization market in Subsections~\ref{pvsyn} and \ref{overview}. Then, we describe the network model in Subsection~\ref{network}, the DT task model in Subsection~\ref{dttask}, and the AR recommendation and ads model in Subsection~\ref{arads}. Finally, we provide the problem formulation in Subsection~\ref{problem} for the allocation and matching of synchronizing AR and MBP pairs.
		
		\subsection{Physical-Virtual Synchronization in Vehicular Metaverse}\label{pvsyn}
		
		The vehicular Metaverse will provide a safe and immersive experience to passengers by synchronizing the physical and the virtual worlds in a real-time manner. During the P2V synchronization, each AV generates and updates its DT model via RSUs and maintains the preference caches based on the AV's historical sensing data and passengers' preferences and needs during the journey. During the V2P synchronization, MBPs deliver personalized AR recommendations and ads to AVs based on the preference caches.
		
		\subsection{An Overview of the Physical-Virtual Synchronization System}\label{overview}
		As shown in Fig. \ref{fig:system}, we consider a real-time physical-virtual synchronization system for the vehicular Metaverse that consists of AVs, RSUs, and MBPs. The key notations used in this work are summarized in Table \ref{table1}.
		\subsubsection{Autonomous Vehicles (AVs)} The set of AVs is denoted as $\mathcal{I}=\{1,\dots,i,\dots,I\}$. In the synchronization system, physical AVs synchronize with the virtual world by continuously generating and executing updates of their DT models. Continuously generated DT tasks arriving at AV $i$ can be represented by $DT_i = \textless s_i^\emph{DT}, e_i^\emph{DT}, \eta_i \textgreater$, where $s_i^\emph{DT}$ is the size of DT data, $e_i^\emph{DT}$ represents the required CPU cycles per unit data, and $\eta_i$ denotes the deadline for accomplishing the task. Each AV $i\in\mathcal{I}$ has the private valuation $\nu_i$ for executing its DT task $DT_i$ that are drawn from the probability distribution $G_i$. The valuations of DT tasks can be interpreted as the capture attributes of AVs, e.g., the urgency to synchronize with DT models~\cite{hui2022collaboration}, which may be different and varying for each AV during its journey.
		\subsubsection{Roadside Units (RSUs)} The set of RSUs is denoted as $\mathcal{J} = \{1,\dots,j,\dots,J\}$. Let $B_j$ denote the total bandwidth, and $P_j$ denote the transmit power allocated by RSU $j$. The bandwidth of each RSU is divided into the downlink bandwidth $B_j^d$ and the uplink bandwidth $B_j^u$, where $B_j^u + B_j^d = B_j$. To provide synchronization services, e.g., executing DT tasks and rendering AR layers, each RSU $j$ is equipped with computing capacities, including the CPU frequency $f^{C}_j$ and the GPU frequency $f^{G}_j$.
		\subsubsection{Metaverse Billboard Providers (MBPs)} The set of MBPs is denoted as $\mathcal{K}=\{0, 1,\dots, k, \dots, K\}$. Similar to the Internet display advertising~\cite{arnosti2016adverse}, we consider two types of MBPs in the system, i.e., brand MBPs
		and performance MBPs.
		The performance MBPs $1,\dots,K$ provide recommendations and performance ads that are designed to elicit real-time feedback from passengers, such as navigation to a shop along the street where a sale may occur.
		The brand MBP $0$ delivers recommendations and brand ads that are designed to raise passengers' awareness of a brand or product, such as infotainment of a sales event and vehicle maintenance for an upcoming opportunity.
		The valuation of personalized AR ads is $Q_{i,k}$ for each synchronizing pair of AV $i$ and MBP $k$, which is the product of the common valuation $\nu_{i}$ of AV $i$ and the match quality $m_{i,k}$, i.e., $Q_{i,k} = \nu_{i}m_{i,k}$. Straightforwardly, common valuations for every MBP $k$ are gained from the provisioning of general recommendations for the synchronizing AV $i$, which can be represented by the AV $i$'s private valuation $\nu_i$. Additionally, the match quality $m_{i,k}$ of MBP $k$ is determined by the amount of personalized information,
		This way, the valuations of AVs and MBPs in synchronizing pairs are positively correlated. Finally, let $Q_{\iota,(l)}$ and $m_{\iota,(l)}$ represent the $l$ highest valuation and match quality for the synchronizing AV $\iota$, respectively.

		\subsubsection{Preference Caches}
		Based on historical internal and external sensing data, preferences, and needs, passengers in AVs maintain preferences of different types of landmarks and landscapes that are stored in preference caches\footnote{The preference caches are uploaded along with AVs' DT models during P2V synchronization.}. Here, let $C_i$ denote the size of the preference caches of each AV $i$. The numbers of matched preference caches $H \in \mathbb{R}_+^{I\times (K+1)}$ of matched AVs and MBPs can be expressed as 
		\begin{equation}
			H =  \begin{bmatrix}
				h_{1,0} &  h_{2,0} & \cdots & h_{I,0} \\
				h_{1,1}& h_{2,1} & \cdots & h_{I,1} \\
				\vdots & \vdots & \ddots & \vdots \\
				h_{1,K}& h_{2,K} & \cdots & h_{I,K}
			\end{bmatrix},
		\end{equation}
		where $0\leq h_{i,k} \leq C_i, \forall i\in \mathcal{I}, k\in \mathcal{K}$ denotes the number of matched preference caches between AV $i$ for MBP $k$.
		Given the total number of MBPs $K+1$, the match qualities $m_{i,k}$ are drawn independently from a set of distributions $m_{i,k} = h_{i,k} \sim F_{i,k}$. 
		
		\subsubsection{Location-based Personalized AR ads}
		
		Based on preference caches of AVs, MBPs distribute personalized brand and performance ads to AVs, each of which consists of one basic layer\footnote{The basic layers consist of some general location-based recommendations and ads, such as weather and real-time traffic monitoring.} and multiple enhancement layers\footnote{The enhancement layers consists of virtual objects of personalized context-aware recommendations and ads based on the matched preference caches, such as navigation and vehicular maintenance notification.}. For each AR ads layer of MBP $k$, the AR rendering task can be represented by $AR_k = \textless s_k^\emph{AR}, e_k^\emph{AR} \textgreater$~\cite{ren2020edge}, where $s_k^\emph{AR}$ is the data size of each AR layer and $e_k^\emph{AR}$ is the required GPU cycles per unit data for rendering. Therefore, given the total number of MBPs $K+1$, the match quality $m_{i,k}$ are drawn independently from a set of distributions $m_{i,k} = h_{i,k} \sim F_{i,k}$. To explain further, given the synchronizing AV $\iota$, the performance MBPs $k = 1,\dots, K$ can measure the match qualities $m_{\iota, k}$ of their recommendations and performance ads. However, the brand MBP $0$ that provides recommendations and brand ads to the synchronizing AV $\iota$ cannot measure its match quality $m_{\iota, 0}$ immediately. Therefore, asymmetric information exists among MBPs that might result in adverse selection.
		
		\subsection{Adverse Selection in the Synchronization Market}\label{adverse}
		
		Adverse selection~\cite{akerlof1978market} refers to a market situation where participants with asymmetric information are only willing to pay at the average market price. This might introduce an inefficient allocation and matching outcome of the synchronization market. In the synchronization market, the physical and virtual entities have positively correlated surpluses for synchronization services. In other words, the surplus of AVs in the physical submarket can affect the surplus of MBPs in the virtual submarket by affecting the valuation of the common valuation of AR ads. Therefore, such correlation introduces externalities and asymmetric information for allocating synchronizing physical and virtual entities in the synchronization service market, as follows.
		\begin{itemize}
			\item Externalities: The externalities are introduced to the physical submarket from the virtual submarket. The personalized AR recommendations and ads of MBPs have different match qualities for different physical AVs. However, during the allocation of synchronizing AV in the physical market, the synchronizing MBP is unknown for participants in the physical submarket, which might affect the total processing delay in the physical submarket. Therefore, AVs in the physical submarket prefer to demand RSU to set a prefixed threshold of synchronization delay before the allocation of the synchronizing AVs.
			\item Asymmetric Information: There exists asymmetric information among MBPs for their personalized recommendations and ads. The performance ads (e.g., navigation to a shop) can induce immediate responses from users. In contrast, brand ads (e.g., infotainment and vehicle maintenance notifications) cannot be measured by brand MBPs immediately.
		\end{itemize}
		
		\subsection{Network Model}\label{network}
		During the real-time physical-virtual synchronization, uplink and downlink transmissions are utilized for updating DTs and delivering AR Metaverse billboards~\cite{zhang2021energy}, respectively. Let $g_{i,j}$ denote the channel gain between AV $i$ and RSU $j$, the downlink transmission rate $R^d_{i,j}$ can be calculated as~\cite{chen2018virtual} $
		R^{d}_{i,j} = B_j^d \log(1+\frac{g_{i,j}P_j}{\sigma_i^2}),$
		where $\sigma^2_i$ is the additive white Gaussian noise at AV $i$.
		Moreover, let $p_i$ denote the transmit power of AV $i$, the uplink transmission rate $R^u_{i,j}$ can be calculated as $
		R^{u}_{i,j} = B_j^u \log(1+\frac{g_{i,j}p_i}{\sigma_j^2}),
		$
		where $\sigma^2_j$ is the additive white Gaussian noise at RSU $j$.

		\subsection{DT Task Model}\label{dttask}
		
		To synchronize with the vehicular Metaverse, physical entities, i.e., AVs, generate and offload DT synchronizing demands, i.e., DT model updates, to RSUs for real-time execution. Therefore, we consider the demands as tasks that are required to be accomplished by RSUs.  The transmission delay for AV $i$ to upload its DT task $DT_i$ to RSU $j$ can be calculated as~\cite{hui2022collaboration} $
		t_{i,j}^\emph{DT} = \frac{s_i^\emph{DT}}{R_{i,j}^{u}}.
		$ The computation delay in processing the DT task $DT_i$ of AV $i$ for RSU $j$ can be calculated as $
		d_{i,j}^\emph{DT} = \frac{s_i^\emph{DT}e_i^\emph{DT}}{f^{C}_{j}}.$
		In the proposed system, without loss of generality, we consider that each RSU has the capability to accomplish both computing and transmission requirements of DT tasks, i.e., $ t_{i,j}^\emph{DT} + d_{i,j}^\emph{DT} \leq \eta_i, \forall i\in \mathcal{I}, j\in\mathcal{J}$. With the use of available communication and computing resources, RSUs can provide AR rendering services for MBPs. This way, MBPs can send their AR recommendations and ads to AVs, i.e., synchronizing from the virtual world to the physical world. 
		
		\subsection{AR Recommendation and Ads Model}\label{arads}
		In the synchronization system, MBPs synchronize personalized AR recommendations and ads that provide an on-road immersive experience to passengers. In the vehicular Metaverse, passengers inside vehicles can observe not only the scenery outside but also AR recommendations and ads overlaying real-world landmarks and landscapes through HUDs.
		The transmission delay in rendering and transmitting the AR recommendations and ads $AR_k$ to AV $i$ from RSU $j$ can be calculated as 
		\begin{equation}
			t_{i,j,k}^\emph{AR} = \frac{(h_{i,k}+1)s_k^\emph{AR}}{R_{i,j}^{d}}.
		\end{equation}
		The computation delay in rendering the AR recommendations and ads $AR_k$ can be calculated as
		\begin{equation}
			d_{i,j,k}^\emph{AR} = \frac{(h_{i,k}+1)s_k^\emph{AR}}{f^C_{j}} + \frac{(h_{i,k}+1)s_k^\emph{AR}e_k^\emph{AR}}{f^G_{j}},
		\end{equation}
		which consists of the input delay for the CPU and the rendering delay for the GPU of RSU $j$.
		
		In the synchronization system, RSUs can use their available computation and communication resources to provide real-time physical-virtual synchronization services for AVs and MBPs. However, the total synchronization delay cannot exceed the required deadline of AV $i$. Let $z_{i,j}^\emph{DT}$ be the allocation variable that AV $i$ is allocated to RSU $j$ and $z_{i,j,k}^\emph{AR}$ be the allocation variable that MBP $k$ is allocated by RSU $j$ to match AV $i$. The total synchronization delay $T^\emph{total}_{i,j,k}$ required by RSU $j$ to process both the DT task of AV $i$ and the AR rendering the task of MBP $k$ should be less than the required deadline, which can be expressed as
		\begin{equation}
			T^\emph{total}_{i,j,k} = z_{i,j}^\emph{DT}\cdot(t_{i,j}^\emph{DT} + d_{i,j}^\emph{DT}) + z_{i,j,k}^\emph{AR}\cdot(t_{i,j,k}^\emph{AR} +d_{i,j,k}^\emph{AR}) \leq \eta_i,
		\end{equation}
		$ \forall i \in \mathcal{I}, j\in\mathcal{J}, k\in\mathcal{K}$. 
		Moreover, when RSUs process DT tasks for AVs, they also display AR recommendations and ads of MBPs to AVs. Therefore, $T^\emph{total}_{i,j,k}$ also represents the expected displaying duration of the AR recommendations and ads of MBP $k$ for AV $i$ with the aid of RSU $j$.
		\subsection{Problem Formulation}\label{problem}
		In the synchronization system, there are two concurrent non-cooperative games for AVs and MBPs competing for the limited synchronization services of RSUs. To motivate RSUs to contribute their resources to the synchronization system, we design a real-time synchronizing market where sellers (RSUs) can earn profits from provisioning synchronization services, and physical and virtual bidders (AVs and VRSs) are competing for synchronization services in the physical and the virtual submarket, respectively.
		We consider bidders (AVs and MBPs) in the synchronization market to be risk neutral, and their surpluses are positively correlated with each other based on the revelation principle~\cite{arnosti2016adverse}.
		Based on this, we consider a mechanism to be a mapping from the private information $\nu=(\nu_1, \dots, \nu_I)$ and $Q = (Q_{1,0}, \dots,Q_{I,K})$ to allocation probabilities $z^\emph{DT} = (z^\emph{DT}_1,\dots, z^\emph{DT}_I)$ and $z^\emph{AR} = (z^\emph{AR}_0, \dots, z^\emph{AR}_K)$ as well as payments $p^\emph{DT}=(p^\emph{DT}_1, \dots, p^\emph{DT}_I)$ and $p^\emph{AR}=(p^\emph{AR}_0, \dots, p^\emph{AR}_K)$. 
		
		The expected surplus for each RSU to accomplish DT tasks for AV $i\in\mathcal{I}$ is defined by
		\begin{equation}
			W^\emph{DT}(z^\emph{DT}) = \mathbb{E}\left[\sum_{i=1}^{I} \nu_i z^\emph{DT}_{i,j}(\nu)\right].
		\end{equation}
		Then, the expected surplus from the AR ads position awarded to brand MBP $0$ is given by $W_\emph{B}^\emph{AR} = \mathbb{E}[Q_{i,0}z^\emph{AR}_{i,j, 0}(Q_i)]$. Similar to the surplus of the brand MBP, the expected surplus from the AR ads position assigned to performance MBPs can be defined as $W_\emph{P}^\emph{AR}(z^\emph{AR}) = \mathbb{E}[\sum_{k=1}^{K}Q_{i,k}z^\emph{AR}_{i,j,k}(Q_i)]$. Therefore, for the matching AV $\iota$, the unit surplus obtained from the AR ads position can be calculated as the weighted sum of these two terms, i.e.,
		\begin{equation}
			W^\emph{AR}(z^\emph{AR}) = \gamma W_\emph{B}^\emph{AR}(z^\emph{AR}) + W_\emph{P}^\emph{AR}(z^\emph{AR}),
		\end{equation}
		where $\gamma$ is the relative bargaining power of brand MBP $0$.
		
		As for the displaying costs of AR recommendations and ads, we consider the cost-per-time model, i.e., MBPs pay according to the total displaying duration for their recommendations and ads. Therefore, the total surplus of MBPs is the product of the expected displaying duration and the total surplus of the virtual submarket, i.e., $T^\emph{total}_{\iota,j,k} \cdot W^\emph{AR}(z^\emph{AR}_\iota)$.

		Our objective is to maximize the total social welfare of AVs and MBPs with positively correlated surpluses by utilizing the limited synchronization services of RSUs. Therefore, the optimization problem for real-time mechanism $\mathcal{M}=(z^\emph{DT}, z^\emph{AR})$ in the synchronization service market can be formulated as
		\begin{maxi!}|s|[2]<b>
			{\mathcal{M}}{W^\emph{DT}+ T^\emph{total}_{i,j,k} \cdot \big(\gamma W_B^\emph{AR}+W_P^\emph{AR}\big)}{}{}
			\addConstraint{T^\emph{total}_{i,j,k}}{\leq \eta_i, \quad }{\forall i\in \mathcal{I}, j\in\mathcal{J}, k\in\mathcal{K}\label{con1}}{}{}
			\addConstraint{h_{i,k}}{\leq C_i, \quad}{\forall i \in \mathcal{I}, k\in \mathcal{K}\label{con2}}{}{}
			\addConstraint{\sum_{i=1}^{I}z_{i,j}^\emph{DT}}{\leq 1,\quad}{\forall j \in \mathcal{J}\label{con3}}{}{}
			\addConstraint{\sum_{k=0}^{K}z_{i,j,k}^\emph{AR}}{\leq 1,\quad}{\forall i \in \mathcal{I}, j \in \mathcal{J}\label{con4}}{}{}
			\addConstraint{z_{i,j}^\emph{DT}}{\in \{0,1\},\quad}{\forall i \in \mathcal{I}, j \in \mathcal{J}\label{con5}}{}{}
			\addConstraint{z_{i,j,k}^\emph{AR}}{\in \{0,1\},\quad}{\forall i \in \mathcal{I}, j \in \mathcal{J}, k \in \mathcal{K}\label{con6}}{}{}.
		\end{maxi!}
		The reliability constraint \eqref{con1} guarantees that the total delay of executing DT and AR tasks is lower than the deadline requirement of the synchronizing AV, and the preference cache constraint \eqref{con2} ensures that the number of match qualities is less than the size of preference caches. The allocation constraints \eqref{con3},\eqref{con4}, \eqref{con5}, and \eqref{con6} ensure each AV and MBP can only allocate and be allocated by one RSU.
		To evaluate this allocation optimization problem of the real-time synchronization market, an omniscient mechanism $\mathcal{M}^{*}$~\cite{arnosti2016adverse} with maximum achievable total social welfare is treated as the omniscient benchmark for the following synchronization mechanism design. The allocation rule of $\mathcal{M}^{*}$ is to award the position of the ads to the performance MBP with the highest match quality whenever $m_{\iota,(1)}>\gamma\mathbb{E}[m_{\iota,0}]$ and to the brand MBP otherwise. However, this allocation rule is practically impossible to implement since it requires the deduction of match qualities from common valuations, which are invisible for the brand MBP. Although $\mathcal{M}^{*}$ provides an achievable upper bound\footnote{The omniscient mechanism can obtain social welfare at around $\nu_{\iota} + \eta_{\iota}\left[\gamma \mathbb{E}[Q_{\iota,0}]+\mathbb{E}[Q_{\iota,(1)}]\right]$, by selecting the optimal physical bidder $\iota \in \mathcal{I}$.} of efficiency and measurable losses for the following mechanism designs, its allocation rule is in general not implementable due to the fact that $\nu$ is unobservable in the virtual submarket.
		
		
		\section{The Synchronization Mechanism}\label{sec:mechanism}
		In this section, we first discuss some preliminaries for the synchronization mechanism in Subsection~\ref{preliminaries}. Then, we propose a simple second-price auction-based physical-virtual synchronization mechanism in Subsection~\ref{pvisa}, named PViSA. Finally, we provide a property analysis of the efficiency of the PViSA mechanism in Subsection~\ref{pvisaproperty}.

		\subsection{Preliminaries}\label{preliminaries}
		We design an auction-based synchronization mechanism in which each RSU simultaneously allocates and matches AVs and MBPs for physical-virtual synchronization by executing DT tasks and rendering AR layers. At each decision slot, an auctioneer, the RSU, or a proxy delegated by RSUs, initiates an auction to facilitate competition for its synchronization services. The mechanism operates in an asynchronous and decentralized manner as each RSU, or its proxy, can initiate an auction independently. For tractability and without loss of generality, we consider the case that one RSU runs an auction for the design of mechanisms.
		
		The physical-virtual synchronization market consists of two submarkets, i.e., the physical submarket for the P2V synchronization and the virtual submarket for the V2P synchronization, which operate concurrently. In the physical submarket, physical entities, i.e., AVs, submit their bids to the auctioneer for executing DT tasks on RSUs. In the virtual submarket, virtual entities, i.e., MBPs, submit their bids to the auctioneer for rendering and distributing AR recommendations and ads via RSUs. The objective of the auctioneer is to assign the synchronization services of RSUs to a synchronizing AV and MBP pair that values it most, i.e., the winning AV and MBP pair, which ensures some promising economic properties, such as strategy-proofness~\cite{krishna2009auction}. By jointly considering the limited resources of RSU, the required execution deadline of DT tasks, and the bidding prices from ARs and MBPs, the auctioneer determines the winning AV and MBP pair and the corresponding payments for the synchronization services.
		During the auction of synchronization services, the brand MBP chooses $b_0$ (in a contract~\cite{zhang2017non}) to maximize its expected utility~\cite{arnosti2016adverse}, as given by $
		U^\emph{AR}_0(b^\emph{AR}) = \mathbb{E}[Q_{\iota,0} - Q_{\iota,(1)} 1_{Q_{\iota, (1)}\leq b_0^\emph{AR}}]$, where $\iota$ is the winning AV in the physical submarket. The structure of the synchronization market is illustrated in Fig.~\ref{fig:mechanism}.
		\begin{figure}
			\centering
			\includegraphics[width=1\linewidth]{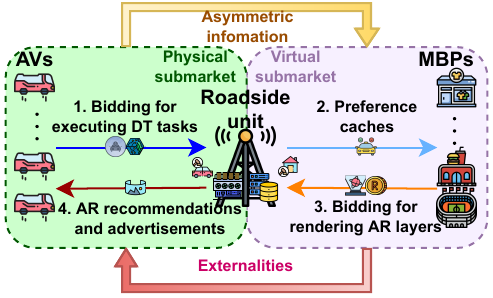}
			\caption{The structure of the physical-virtual synchronization market in the vehicular Metaverse.}
			\label{fig:mechanism}
		\end{figure}
		
		\subsection{Designing the PViSA Mechanism}\label{pvisa}
		In the second-price auction-based physical-virtual mechanism, the auctioneer adopts the second-price allocation rule and pricing rule~\cite{krishna2009auction} simultaneously in the physical submarket and the virtual submarket to allocate and match synchronizing AV and MBP pairs. The PViSA mechanism works in two phases for the physical and the virtual submarkets simultaneously, as follows.
		\begin{mechanism}[PViSA Mechanism]
			The PViSA mechanism can be characterized by the allocation rule and pricing rule in both physical and virtual submarkets as $\mathcal{M}^\emph{PViSA} = (\mathcal{Z}^\emph{PViSA}_\emph{phy}, \mathcal{P}^\emph{PViSA}_\emph{phy},\mathcal{Z}^\emph{PViSA}_\emph{vir},\mathcal{P}^\emph{PViSA}_\emph{vir})$. Before the auction of the PViSA mechanism, each RSU $j$ first announces its threshold deadline $\hat{\eta}_j$ to the auctioneer that $T^\emph{total}_{i,j,k}\leq \hat{\eta}_j \leq \eta_i, \forall i\in\mathcal{I}$.
			\begin{enumerate}
				\item Physical submarket:
				In the physical submarket, AVs submit their bids $b^\emph{DT} = (b^\emph{DT}_1, \dots, b^\emph{DT}_I)$ for the computation and communication resources of RSU $j$. The auctioneer determines the allocation rule and pricing rule for the physical submarket in PViSA as follows.
				\begin{itemize}
					\item Allocation rule: The auctioneer calculates the allocation probabilities $z^\emph{DT}_i$ for every AV $i$ by evaluating whether it submits the highest bid, i.e.,
					\begin{equation}
						\mathcal{Z}^\emph{PViSA}_\emph{phy}  : z_i^\emph{DT}(b^\emph{DT}) = 1_{\{b_i^\emph{DT} > \max \{b_{-i}^\emph{DT}\}\}}.
					\end{equation}
					\item Pricing rule: Following the pricing rule of the second-price sealed-bid auction, the auctioneer calculates the payments for each bidder as
					\begin{equation}
						\mathcal{P}^\emph{PViSA}_\emph{phy}  : p_i^\emph{DT}(b^\emph{DT}) =  z_i^\emph{DT}(b^\emph{DT}) \cdot \max  \{b^\emph{DT}_{-i}\}.
					\end{equation}
				\end{itemize}
				\item Virtual submarket: Based on the personal attributes and preference caches of the winning AV in the physical submarket, MBPs in the virtual submarket submitted their bids $b^\emph{AR} = (b^\emph{AR}_0, b^\emph{AR}_1, \dots, b^\emph{AR}_K)$ to the auctioneer. Then, the auctioneer executes the allocation rule and the pricing rule of the PViSA mechanism in the virtual submarket as follows.
				\begin{itemize}
					\item Allocation rule: The ads position is allocated to the synchronizing MBP submitted the highest bid as
					\begin{equation}
						\mathcal{Z}^\emph{PViSA}_\emph{vir} : z^\emph{AR}_k(b^\emph{AR}) = 1_{b_k^\emph{AR} > \max \{b^\emph{AR}_{-k}\}}.
					\end{equation}
					\item Pricing rule: The charge to the winning MBP is determined by the second-highest bid for the bidding profile as
					\begin{equation}
						\mathcal{P}^\emph{PViSA}_\emph{vir} : p_k^\emph{AR}(b) =  z_k^\emph{AR}(b^\emph{AR}) \cdot T^\emph{total}_{i,j,k} \max  \{b_{-k}^\emph{AR}\}.
					\end{equation}
				\end{itemize}
			\end{enumerate}
		\end{mechanism}
		
		\subsection{Property Analysis}\label{pvisaproperty}
		We find that the PViSA mechanism is inefficient in allocation. To illustrate the inefficiency, we consider two extreme cases that either (i) sets the reserve at zero (and thus, the brand MBP always loses) or (ii) sets the reserve at an arbitrarily high valuation (and thus, the brand MBP never loses). Therefore, the second-price sealed-bid auction can achieve a surplus of $\max(\gamma\mathbb{E}[X_0], \mathbb{E}[X_{(1)}])$. Suppose that AV $i$ is the winning AV in the physical submarket, we demonstrate how inefficient $\mathcal{M}^\emph{PViSA}$ is by showing its bounds in the following proposition.
		
		\begin{proposition}[Efficiency of PViSA]\label{proposition1} Let $\mathcal{M}_{b^\emph{AR}_0}^\emph{PViSA}$ denote the mechanism of PViSA when a bid of $b_0^\emph{AR}$ is submitted by the brand MBP $0$ (or by the auctioneer on behalf of the brand MBP). There exists $b_0^\emph{AR}\in\{0,\infty\}$ such that for any $\gamma>0, I\geq 2, K\geq 2, F_{i,k}, G_i$, and $\mathbb{E}[m_{\iota, 0}]$, the lower bound is 
			\begin{equation}
				W(\mathcal{M}_{b_0^\emph{AR}}^\emph{PViSA}) \geq \frac{1}{2}W(\mathcal{M}^{*}).
			\end{equation}
			Moreover, there exist $G_i,F_{i,k},K,I,$ and $\mathbb{E}[m_{i,0}]$ for any $\gamma,\epsilon>0$ such that
			\begin{equation}
				\sup_{{b^\emph{AR}_0}\geq 0} W(\mathcal{M}^\emph{PViSA}_{b^\emph{AR}_0}) < (\frac{1}{2}+\epsilon)W(\mathcal{M}^{*}),
			\end{equation}
			where the match qualities $m_{i,k}$ of the performance recommendations and ads are drawn i.i.d. from $F_{i,k}$.
		\end{proposition}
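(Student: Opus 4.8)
The plan is to collapse the two-submarket comparison into a one-dimensional inequality about the virtual submarket. Both $\mathcal{M}^\emph{PViSA}_{b_0^\emph{AR}}$ and $\mathcal{M}^{*}$ serve the physical second-price winner $\iota$ and then award the single ads position, so the additive physical surplus $\nu_\iota$ and the common displaying-duration factor (bounded by $\eta_\iota$ as in the omniscient benchmark) enter both mechanisms identically; moreover, since $Q_{i,k}=\nu_i m_{i,k}$, the common value $\nu_\iota$ scales the brand and best-performance options equally and factors out of the omniscient and extreme-reserve surpluses. Writing $a:=\gamma\mathbb{E}[m_{\iota,0}]$ for the brand's bargaining-weighted unit contribution and $Y:=m_{\iota,(1)}$ for the best performance match quality, the allocation rule of $\mathcal{M}^{*}$ yields per-unit virtual surplus $\mathbb{E}[\max(a,Y)]$, whereas the two extreme reserves $b_0^\emph{AR}\in\{0,\infty\}$ yield $\mathbb{E}[Y]$ (the brand always loses) and $a$ (the brand always wins); the better of the two equals $\max(a,\mathbb{E}[Y])$.

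For the lower bound I would chain two elementary facts for nonnegative quantities. First, $\max(a,Y)\le a+Y$ pointwise, so $\mathbb{E}[\max(a,Y)]\le a+\mathbb{E}[Y]$, which is exactly the additive upper bound on $W(\mathcal{M}^{*})$ recorded in the omniscient footnote. Second, $\max(a,\mathbb{E}[Y])\ge\tfrac12\big(a+\mathbb{E}[Y]\big)$. Combining the two gives $\max(a,\mathbb{E}[Y])\ge\tfrac12\mathbb{E}[\max(a,Y)]$. Re-attaching the common displaying factor and the additive physical surplus only helps, because adding the same nonnegative quantity to the numerator and denominator of a ratio that is already at least $\tfrac12$ cannot lower it; hence selecting whichever of $b_0^\emph{AR}\in\{0,\infty\}$ attains $\max(a,\mathbb{E}[Y])$ delivers $W(\mathcal{M}^\emph{PViSA}_{b_0^\emph{AR}})\ge\tfrac12 W(\mathcal{M}^{*})$ uniformly in $\gamma,I\ge2,K\ge2,F_{i,k},G_i$, and $\mathbb{E}[m_{\iota,0}]$.

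For tightness I would reverse-engineer a family that makes both inequalities nearly equalities. The second is tight when $a=\mathbb{E}[Y]$, so I take each performance match quality i.i.d.\ two-valued, which makes $Y=m_{\iota,(1)}$ two-valued on $\{y_L,y_H\}$ with $0<y_L<a<y_H$, and I fix the probabilities so that $\mathbb{E}[Y]=a$, balancing the two extremes. The first is tight when $\min(a,Y)\approx0$, which I force by sending $y_H\to\infty$ (and $y_L\to0$) while preserving $\mathbb{E}[Y]=a$, so that $\mathbb{E}[\max(a,Y)]\to 2a$ while $\max(a,\mathbb{E}[Y])=a$. Letting the displaying duration $\eta_\iota$ grow renders the additive physical term negligible, so the overall efficiency ratio approaches the virtual-surplus ratio, which tends to $\tfrac12$.

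The crux, and the step I expect to be the main obstacle, is that the second claim bounds the \emph{supremum} over all reserves $b_0^\emph{AR}\ge0$, not merely the two extremes: a priori an interior reserve could recover the efficient, value-independent threshold (allocate to the brand exactly when $Y\le a$) and thereby reach $\mathbb{E}[\max(a,Y)]$ itself, as indeed happens whenever the common value $\nu_\iota$ is deterministic. This is precisely where adverse selection enters: the brand commits to a bid that cannot be conditioned on the realized common value $\nu_\iota$ of the winning AV, so the auction compares the fixed $b_0^\emph{AR}$ against performance bids $\nu_\iota m_{\iota,(1)}$ and thus implements the value-\emph{dependent} threshold $Y\le b_0^\emph{AR}/\nu_\iota$. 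To defeat every reserve at once I would draw $\nu_\iota$ (through $G_i$ and the physical maximum) from a heavy-tailed, equal-revenue type law, under which the per-reserve efficiency loss becomes essentially flat in $b_0^\emph{AR}$ and no single threshold captures more than a vanishing fraction of the dispersed common-value mass; verifying this flatness and that the resulting supremum stays below $(\tfrac12+\epsilon)W(\mathcal{M}^{*})$ is the delicate calculation that finishes the construction.
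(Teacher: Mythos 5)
Your first half (the factor-$2$ lower bound) is correct and is essentially the paper's own argument: the two extreme reserves $b_0^{\mathrm{AR}}\in\{0,\infty\}$ secure a per-unit virtual surplus of $\max\bigl(\gamma\mathbb{E}[Q_{\iota,0}],\mathbb{E}[Q_{\iota,(1)}]\bigr)$, and chaining $\mathbb{E}[\max(a,Y)]\leq a+\mathbb{E}[Y]\leq 2\max(a,\mathbb{E}[Y])$ gives the bound, exactly as in the paper's proof of Proposition~3; your observation that re-attaching the common nonnegative physical term cannot lower a ratio already at least $\tfrac12$ is valid (and is, if anything, more explicit than the paper, which works conditionally on the synchronizing AV $\iota$ and glosses over the fact that PViSA's physical winner maximizes $\nu$ alone while the omniscient benchmark may select a different AV).

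The genuine gap is in the second half, and you name it yourself: the claim bounds $\sup_{b_0^{\mathrm{AR}}\geq 0}$ over \emph{all} reserves, and your proposal ends with ``verifying this flatness \ldots{} is the delicate calculation that finishes the construction''---that calculation is the proof, and it is not carried out. Worse, your specific two-point family makes it harder than necessary: with $Y\in\{y_L,y_H\}$ and $y_H/y_L\to\infty$, any common-value distribution with bounded support ratio $V<y_H/y_L$ is defeated by an interior reserve $b_0\in(Vy_L\cdot\nu_{\min},\,y_H\cdot\nu_{\min})$ that implements the omniscient threshold exactly, so you are forced into a coupled double limit in which the $\nu$-tail must outrun the $Y$-gap, and nothing in the proposal controls that interaction. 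The paper's proof avoids it by a scale-free construction: take the match qualities $m_{\iota,k}$ i.i.d.\ power law with parameter $a\to 1^{+}$, calibrated so that $\gamma\mu=\gamma\mathbb{E}[m_{\iota,0}]=(1+\epsilon)\mathbb{E}[m_{\iota,(1)}]$, which drives $W(\mathcal{M}^{*})/\bigl(\mathbb{E}[\nu]\mathbb{E}[m_{\iota,(1)}]\bigr)\to 2+\epsilon$ because the omniscient rule captures nearly all performance surplus on a vanishing fraction of positions; then draw $\nu_\iota$ from a power law with parameter $a'\to 1^{+}$, under which adverse selection makes every anticipated reserve no better than the extremes, so $\sup_{b_0^{\mathrm{AR}}}W(\mathcal{M}^{\mathrm{PViSA}}_{b_0^{\mathrm{AR}}})=\gamma\mu\,\mathbb{E}[\nu_\iota]=(1+\epsilon)\mathbb{E}[m_{\iota,(1)}]\mathbb{E}[\nu_\iota]$ and the ratio falls below $(1+\epsilon)/2$. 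Your diagnosis of \emph{why} interior reserves fail (the brand's bid cannot condition on the realized $\nu_\iota$, so the auction implements the value-dependent threshold $Y\leq b_0/\nu_\iota$) is exactly right and matches the paper's mechanism of failure; to complete the proof you should replace the degenerate two-point family with the power-law family, for which the supremum over reserves is computable in closed form in the $a,a'\to 1^{+}$ limit.
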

		\begin{proof}
			Please see Appendix \ref{appendix:a}.
		\end{proof}
		
		It is well known that the second-score auction is economically efficient~\cite{krishna2009auction}, such that the allocation in the physical submarket is efficient. The PViSA mechanism is based on two parallel second-price auctions in the physical and the virtual submarkets, respectively. Therefore, the worst case of social welfare obtained by the PViSA mechanism is half of the omniscient mechanism~\cite{arnosti2016adverse}.
		
		\section{The Enhanced Synchronization Mechanism}\label{sec:enhanced}
		In the vehicular Metaverse, the interplay of the physical and virtual submarkets causes externalities and asymmetric information in the synchronization market.
		To address the inefficiency issue of the PViSA mechanism, we apply several advanced techniques in auction theory~\cite{che1993design, arnosti2016adverse} to enhance the auction-based synchronization mechanism by overcoming the externalities in the physical submarket and the asymmetric information in the virtual submarket in Subsection~\ref{epvisa}. For the P2V synchronization, AVs in the physical submarket are allowed to submit their prices and preferred deadlines of DT tasks to the auctioneer. In addition, for the V2P synchronization, the auctioneer can determine the allocation rule according to the received bids from AVs with the synchronization scoring rule. Moreover, for the V2P synchronization, by adopting the price scaling factor $\alpha \geq 1$ in the virtual submarket, the auctioneer can capture a significant fraction of the total surplus from both performance and brand MBPs. Finally, we analyze the properties of the EPViSA mechanism in Subsection~\ref{epvisaproperty}.
		\subsection{Designing the EPViSA Mechanism}\label{epvisa}
		In this subsection, we describe the workflow and property analysis of the enhanced second-score auction-based physical-virtual synchronization mechanism.
		To begin with, the definition of the synchronization scoring rule that is similar to~\cite{che1993design} is provided as follows.

		\begin{definition}[Synchronization Scoring Rule] 
			For any submitted bidding price $q$ and required deadline $\eta$, a synchronization scoring rule $\Phi^\emph{syn}(q, \eta)$ for AVs' bids can be given by
			\begin{equation}
				\Phi^\emph{syn}(q, \eta) = q - \phi(\eta),
			\end{equation}
			where $\phi(\cdot)$ is a non-decreasing function and $\phi(0)=0$.
		\end{definition}
		
		This synchronization scoring rule can be leveraged by the auctioneer to evaluate the synchronizing MBP in the virtual submarket. Based on the synchronization scoring rule, the auctioneer can determine the synchronizing AV in the physical market and eliminate the external effects. The workflow of the EPViSA mechanism can be illustrated as follows.
		
		\begin{mechanism}[EPViSA Mechanism]
			Similar to the PViSA mechanism, the EPViSA mechanism consists of the allocation rules and the pricing rules in the physical and the virtual submarkets, i.e., $\mathcal{M}^\emph{EPViSA} = (\mathcal{Z}^\emph{EPViSA}_\emph{phy}, \mathcal{P}^\emph{EPViSA}_\emph{phy},\mathcal{Z}^\emph{EPViSA}_\emph{vir},\mathcal{P}^\emph{EPViSA}_\emph{vir})$.
			\begin{enumerate}
				\item Physical submarket:
				To allocate synchronization services, AVs submit their multi-dimensional bids $b^\emph{DT} = ((b_1^\emph{DT}, \dots , b_I^\emph{DT}), \eta = (\eta_1, \dots, \eta_I))$ to the auctioneer. The auctioneer computes the scores $\Phi^\emph{syn} = \Phi^\emph{syn}(b^\emph{DT}, \eta) = (\phi^\emph{syn}_1(b_1^\emph{DT}, \eta_1), \dots, \phi^\emph{syn}_I(b_I^\emph{DT}, \eta_I))$ according to the received bids and the synchronization scoring rule. Based on the calculated scores, the auctioneer determines the allocation probabilities and the payments of AVs in the physical submarket.
				\begin{itemize}
					\item Allocation rule: The auctioneer allocates the AV with the highest score as the winning AV, as follows:
					\begin{equation}
						\mathcal{Z}^\emph{EPViSA}_\emph{phy} : z_i^\emph{DT}(\Phi^\emph{syn}) = 1_{\{\Phi^\emph{syn}_i>\max \{\Phi^\emph{syn}_{-i}\}\}}.
					\end{equation}
					\item Pricing rule: Then, the winning AV is charged with the bidding price of the bidder with the second highest score as follows:
					\begin{equation}
						\mathcal{P}^\emph{EPViSA}_\emph{phy} : p_i^\emph{DT}(\Phi^\emph{syn}) = z_i^\emph{DT}(\Phi^\emph{syn}) \cdot b^\emph{DT}_{\arg\max \{\Phi^\emph{syn}_{-i}\}}.
					\end{equation}
				\end{itemize}
				
				\item Virtual submarket: In the virtual submarket, the brand MBP and performance MBPs submit their bids $b^\emph{AR} = (b^\emph{AR}_0, b^\emph{AR}_1, \dots, b^\emph{AR}_K)$ to the auctioneer. Then, the auctioneer determines the allocation probabilities and payments under the scale $\alpha$ of the EPViSA mechanism in the virtual submarket, as follows:
				\begin{itemize}
					\item Allocation rule: For performance MBPs $k = 1,\dots, K$, the allocation probabilities are given by
					\begin{equation}
						\mathcal{Z}^\emph{EPViSA}_\emph{vir} : z^\emph{AR}_{k}(b^\emph{AR})=1_{b^\emph{AR}_{k} > \alpha b^\emph{AR}_{-k}},
					\end{equation}
					for $\alpha \geq 1$.
					Then, the allocation probability for brand MBP $0$ is $z_0^\emph{AR}(b^\emph{AR})\in\{0,1\}$ can be calculate based on the allocation probabilities of performance MBPs $1,\dots,K$, which can be expressed as $z^\emph{AR}_0(b^\emph{AR})\leq 1-\sum_{k=1}^{K}z^\emph{AR}_k(b^\emph{AR})$.
					\item Pricing rule: Finally, the winning MBP is charged with the product of the second highest bid and the price scaling factor $\alpha$ as
					\begin{equation}
						\mathcal{P}^\emph{EPViSA}_\emph{vir}: p^\emph{AR}_k(b^\emph{AR}) = z_k^\emph{AR}(b^\emph{AR})\cdot \rho_k^\emph{AR},
					\end{equation}
					where
					\begin{equation}
						\rho_k^\emph{AR} = \begin{cases}
							T^\emph{total}_{i,j,0}b^\emph{AR}_{0}, &  k=0, \\ 
							T^\emph{total}_{i,j,k}\alpha \max  \{b^\emph{AR}_{-k}\}, & k=1,\dots,K.
						\end{cases}
					\end{equation}
				\end{itemize}
			\end{enumerate}
			
		\end{mechanism}
		Based on the above allocation and pricing rules, we next analyze the economic properties of the proposed mechanism.
		\subsection{Property Analysis}\label{epvisaproperty}
		To maximize its utility, each physical AV $i$ can choose the deadline that can maximize the sum of its valuation $\nu_i$ and externality $\phi(\eta_i)$ for the virtual submarket. In Proposition~\ref{proposition2}, each AV can choose the optimal deadline to maximize its expected payoff.
		\begin{proposition} [Optimal Deadline]\label{proposition2}
			The optimal deadline bidding strategy of the AV $i$ is given by
			\begin{equation}
				\eta_i^* = \arg\max_{\eta\in(0,\eta_i]} (\nu_i + \phi(\eta)).\label{eq:deadline}
			\end{equation}
		\end{proposition}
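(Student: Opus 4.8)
The plan is to study AV $i$'s interim expected payoff under the physical-submarket rules of EPViSA and show that, in the deadline dimension, it is driven entirely by the scalar $\nu_i + \phi(\eta)$, so that any deadline maximizing this quantity is a best response. First I would fix the reports of all competing AVs and write AV $i$'s expected utility as a function of its own two-dimensional report $(b_i^\emph{DT},\eta)$. Under $\mathcal{Z}^\emph{EPViSA}_\emph{phy}$, AV $i$ wins exactly when its synchronization score $\Phi^\emph{syn}_i = b_i^\emph{DT} - \phi(\eta)$ exceeds the largest competing score $\max\{\Phi^\emph{syn}_{-i}\}$, and under $\mathcal{P}^\emph{EPViSA}_\emph{phy}$ it then pays the runner-up's submitted bid $b^\emph{DT}_{\arg\max\{\Phi^\emph{syn}_{-i}\}}$. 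The first substantive modeling step is to identify AV $i$'s value of winning with a reported deadline $\eta$ as $\nu_i + \phi(\eta)$: the term $\nu_i$ is the private DT-task valuation, while $\phi(\eta)$ is the externality that the chosen deadline confers on the virtual submarket through the admissible display duration $T^\emph{total}_{i,j,k}\le \eta$, which the scoring rule internalizes back to AV $i$.

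Second, following the second-score auction argument of Che \emph{et al.}~\cite{che1993design}, I would show that the price and deadline dimensions separate. Because the amount charged to the winner is pinned down by the competitors' scores rather than by $b_i^\emph{DT}$ itself, reporting the score truthfully — i.e.\ submitting $b_i^\emph{DT}$ so that the induced value $\Phi^\emph{syn}_i + \phi(\eta) = b_i^\emph{DT}$ reflects $\nu_i + \phi(\eta)$ — is weakly dominant in the price coordinate, exactly as bidding one's value is weakly dominant in a single-item second-price auction. This reduces AV $i$'s two-dimensional problem to choosing the deadline that maximizes the single scalar governing both its winning probability and its conditional surplus.

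Third, I would verify that this scalar is precisely $\nu_i + \phi(\eta)$. After the reduction, the winning event becomes $\{\nu_i + \phi(\eta) > \text{(second-highest competing pseudo-value)}\}$ and the surplus conditional on winning is $\bigl(\nu_i + \phi(\eta) - \text{(second-highest competing pseudo-value)}\bigr)^{+}$; both are non-decreasing in $\nu_i + \phi(\eta)$, and neither the threshold nor the payment depends on AV $i$'s own deadline. Hence the interim expected utility is non-decreasing in $\nu_i + \phi(\eta)$, so it is maximized at any $\eta\in(0,\eta_i]$ attaining $\max_{\eta}\bigl(\nu_i + \phi(\eta)\bigr)$, giving $\eta_i^* = \arg\max_{\eta\in(0,\eta_i]}(\nu_i + \phi(\eta))$ as claimed. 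Since $\phi$ is non-decreasing with $\phi(0)=0$, the maximizer is well defined and is attained at the upper endpoint, so the argmax set is nonempty.

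The hard part will be the second step: making rigorous, for this specific pricing rule, the claim that the price and deadline coordinates decouple. I would need to confirm that charging the winner the runner-up's \emph{raw} bid (rather than a deadline-adjusted transfer that restores the second score) still leaves the winner's conditional surplus independent of its own bid and monotone in $\nu_i + \phi(\eta)$, so that truthful score reporting is weakly dominant and the deadline can be optimized in isolation; I would also handle the boundary behavior at $\eta\to 0^{+}$ and the tie-breaking cases where the maximum of $\phi$ over $(0,\eta_i]$ is attained non-uniquely. Once this separation and monotonicity are secured, the optimization over $\eta$ is immediate from the monotonicity of $\phi$, and the remaining algebra is routine.
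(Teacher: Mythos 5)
Your proposal is essentially the paper's own argument: the paper likewise exploits the Che-style separation of the score and deadline coordinates, constructing for any bid $(\bar{b}_i^\emph{DT},\bar{\eta}_i)$ a score-equivalent bid $(\hat{b}_i^\emph{DT},\hat{\eta}_i)$ with $\hat{\eta}_i$ given by \eqref{eq:deadline}, so that allocation is unchanged and, conditional on winning, utility differs only through $\nu_i+\phi(\eta)$ — exactly your reduction to maximizing that scalar. The ``hard part'' you flag is a genuine wrinkle in the paper (the mechanism text charges the runner-up's \emph{raw} bid, while the appendix proofs, including this one and that of Theorem~\ref{theorem1}, work with the score-adjusted critical payment $\max\{\Phi^\emph{syn}_{-i}\}+\phi(\eta_i)$), but under the payment the paper actually uses in its proof, your monotonicity conclusion goes through unchanged.
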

		\begin{proof}
			Please see Appendix \ref{sec:p2}.
		\end{proof}
		For the optimality of the selection of quality, a similar proof of Proposition \ref{proposition2} can be found in~\cite{tang2018multi}. Based on the bids submitted by AVs and the chosen optimal deadline, the auctioneer can maintain an efficient synchronization scoring rule, which can maximize the social welfare, to guide the allocation decisions in the physical submarket, as follows.
		\begin{definition}[Efficient Synchronization Scoring Rule]\label{definition2} An efficient synchronization scoring rule is in the form of
			\begin{equation}
				\Phi^\emph{syn}(q^\emph{DT}, \eta^*) = q^\emph{DT} + \eta^*[\gamma W_B^\emph{AR}(\mathcal{M}) + W_\emph{P}^\emph{AR}(\mathcal{M})],
			\end{equation}
			where $\eta^* [\gamma W_B^\emph{AR}(\mathcal{M}) + W_\emph{P}^\emph{AR}(\mathcal{M})]$ is the total surplus of MBPs by providing Metaverse billboards.
		\end{definition}
		
		With the above definition of the efficient synchronizing scoring rule, the worst-case analysis of the EPViSA mechanism is given as follows.
		
		\begin{proposition}\label{proposition3}With the efficient synchronization scoring rule in Definition \ref{definition2}, for any $\gamma>0, K, I, F_{i,k}, G_i,$ and $\mathbb{E}[m_{i,0}]$, there exists $\alpha\in\{1,\infty\}$ such that
			\begin{equation}
				W(\mathcal{M}^\emph{EPViSA}_\alpha) \geq \frac{1}{2}W(\mathcal{M}^{*}).
			\end{equation}
		\end{proposition}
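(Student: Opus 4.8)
The plan is to separate the physical and virtual submarkets, bound the virtual surplus at the two extreme scalings, and recombine by exploiting the physical efficiency of the scoring rule, in the spirit of the proof of Proposition~\ref{proposition1}.

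First I would establish that $\mathcal{M}^\emph{EPViSA}_\alpha$ is efficient in the physical submarket. By Proposition~\ref{proposition2} each AV $i$ submits its utility-maximizing deadline $\eta_i^*$, and by Definition~\ref{definition2} the efficient synchronization scoring rule assigns to each AV the score $\nu_i + \eta_i^*[\gamma W_B^\emph{AR}(\mathcal{M}^\emph{EPViSA}_\alpha) + W_P^\emph{AR}(\mathcal{M}^\emph{EPViSA}_\alpha)]$, namely the AV's own valuation plus the expected virtual surplus it induces under the chosen $\alpha$. The second-score rule $\mathcal{Z}^\emph{EPViSA}_\emph{phy}$ thus awards synchronization to the AV maximizing the total welfare of $\mathcal{M}^\emph{EPViSA}_\alpha$. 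Let $\iota$ be the winning AV of the omniscient mechanism $\mathcal{M}^{*}$, and let $S_\alpha(i)$ and $S^*(i)$ denote the per-unit virtual surpluses induced by AV $i$ under $\mathcal{M}^\emph{EPViSA}_\alpha$ and $\mathcal{M}^{*}$, respectively. Since the EPViSA winner maximizes EPViSA's own objective, a deviation to $\iota$ can only lower it, giving $W(\mathcal{M}^\emph{EPViSA}_\alpha) \ge \nu_\iota + \eta_\iota^* S_\alpha(\iota)$; it therefore suffices to lower bound $S_\alpha(\iota)$ against $S^*(\iota)$.

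Next I would evaluate the virtual surplus at the two extremes. Because performance MBPs bid truthfully under the pricing rule $\mathcal{P}^\emph{EPViSA}_\emph{vir}$, taking $\alpha \to \infty$ handicaps every performance MBP so the brand MBP $0$ always wins, yielding $S_\infty(\iota) = \gamma \nu_\iota \mathbb{E}[m_{\iota,0}]$, while $\alpha = 1$ lets the highest-match performance MBP win, yielding $S_1(\iota) = \nu_\iota \mathbb{E}[m_{\iota,(1)}]$. By its allocation rule, the omniscient virtual surplus equals $S^*(\iota) = \nu_\iota \mathbb{E}[\max(\gamma \mathbb{E}[m_{\iota,0}], m_{\iota,(1)})]$. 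The key inequality is the elementary bound $\max(a, M) \le a + M$ for nonnegative $a, M$: setting $a = \gamma \mathbb{E}[m_{\iota,0}]$ (a constant) and $M = m_{\iota,(1)}$ and taking expectations gives $\mathbb{E}[\max(a, M)] \le a + \mathbb{E}[M] \le 2\max(a, \mathbb{E}[M])$. Hence the extreme $\alpha \in \{1, \infty\}$ attaining $\max(a, \mathbb{E}[M])$ satisfies $S_\alpha(\iota) \ge \tfrac12 S^*(\iota)$.

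Finally I would recombine: substituting into the deviation bound and using $\nu_\iota \ge \tfrac12 \nu_\iota$ yields $W(\mathcal{M}^\emph{EPViSA}_\alpha) \ge \nu_\iota + \tfrac12 \eta_\iota^* S^*(\iota) \ge \tfrac12(\nu_\iota + \eta_\iota^* S^*(\iota)) = \tfrac12 W(\mathcal{M}^{*})$, as claimed. The step I expect to be the main obstacle is the first one: one must verify that the surpluses entering Definition~\ref{definition2} are computed under the \emph{same} scaling $\alpha$ that governs the virtual submarket, so that each AV's score is exactly the realized welfare of $\mathcal{M}^\emph{EPViSA}_\alpha$ and the deviation bound anchors correctly at $\iota$. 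Once physical efficiency is secured in this way, the virtual-side $\tfrac12$ estimate is routine and, crucially, holds uniformly over $\gamma, K, I, F_{i,k}, G_i,$ and $\mathbb{E}[m_{\iota,0}]$, matching the universal quantifiers in the statement.
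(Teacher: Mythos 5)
Your proposal is correct and follows essentially the same route as the paper: evaluate the virtual surplus at the two extremes $\alpha\in\{1,\infty\}$ (brand MBP never wins vs.\ always wins) and apply the chain $\mathbb{E}[\max(\gamma\nu\mathbb{E}[m_{\iota,0}],\,\nu m_{\iota,(1)})]\leq \gamma\mathbb{E}[Q_{\iota,0}]+\mathbb{E}[Q_{\iota,(1)}]\leq 2\max(\gamma\mathbb{E}[Q_{\iota,0}],\mathbb{E}[Q_{\iota,(1)}])$, which is exactly the paper's argument. Your explicit anchoring at the omniscient winner $\iota$ via the deviation bound, and the separate handling of the physical term $\nu_\iota$ through $\nu_\iota\geq\tfrac{1}{2}\nu_\iota$, merely spells out what the paper compresses into its one-sentence claim that the efficient synchronization scoring rule makes the physical allocation surplus-maximizing.
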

		\begin{proof}
			Please see Appendix \ref{sec:p3}.
		\end{proof}

		For a synchronization mechanism, strategy-proofness indicates that participants will not get a higher utility by changing their truthful bids. Adverse selection free indicates that if the existence of market externalities and asymmetric information is independent of bidders' valuations, then under this mechanism, the factors of market externalities and asymmetric information are also independent of the allocation rules of the synchronization mechanisms. As a consequence, it should be highlighted that the EPViSA mechanism is fully strategy-proof and adverse-selection free, as demonstrated in the following theorem.
		\begin{theorem}\label{theorem1}
			The EPViSA mechanism is fully strategy-proof and adverse-selection-free in the synchronization market with the efficient synchronization scoring rule and the cost-per-time model of AR recommendations and ads.
		\end{theorem}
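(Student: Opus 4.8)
The plan is to establish the two asserted properties separately, and within strategy-proofness to verify the three ingredients (truthful bidding, winner false-name-proofness, and loser false-name-proofness) in each of the two submarkets. Because the physical and virtual auctions of $\mathcal{M}^\emph{EPViSA}$ are run independently once the winning AV $\iota$ is fixed, each property can be checked submarket by submarket and then combined.

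For strategy-proofness in the physical submarket, I would treat it as a second-score auction in the sense of~\cite{che1993design}. By Proposition~\ref{proposition2}, each AV's optimal deadline $\eta_i^*$ maximizes $\nu_i + \phi(\eta)$ and is therefore chosen independently of the rival bids; this collapses the two-dimensional report to the single price dimension. Fixing the rivals' scores, the winner's payment $b^\emph{DT}_{\arg\max\{\Phi^\emph{syn}_{-i}\}}$ does not depend on the winner's own bid, so an AV's report affects only whether it wins. A standard Vickrey-type exchange argument then shows that reporting the true valuation (together with $\eta_i^*$) is a dominant strategy, because the AV wins precisely when its surplus exceeds the price it would pay. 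For the virtual submarket, performance MBPs face an $\alpha$-scaled second-price auction: MBP $k$ wins iff $b^\emph{AR}_k > \alpha\max\{b^\emph{AR}_{-k}\}$ and, when winning, pays $T^\emph{total}_{i,j,k}\,\alpha\max\{b^\emph{AR}_{-k}\}$, a quantity independent of $b^\emph{AR}_k$; hence bidding the true value $Q_{\iota,k}$ is dominant by the same pivotality argument. The brand MBP $0$ commits to $b^\emph{AR}_0$ as the maximizer of its expected utility $U^\emph{AR}_0$, so by construction it cannot profit from a deviation. For the false-name components I would use a pivotality argument in each submarket: winner false-name-proofness holds because, in a second-score or second-price rule, any extra identity a winner submits can only become the new runner-up and thereby weakly raise the price, never lower the payment, so splitting bids is never profitable; loser false-name-proofness holds because additional bids that do not exceed the submitter's own value stay below the clearing price and are therefore never pivotal, leaving the allocation and the payments of every competitor unchanged, so the required inequality holds with equality. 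Combining truthfulness with both false-name properties in both submarkets yields full strategy-proofness.

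For adverse-selection-freeness I would invoke the strategy-proofness just established, so that each performance MBP bids $b^\emph{AR}_k = Q_{\iota,k} = \nu_\iota m_{\iota,k}$. The brand wins iff no performance MBP wins, i.e. iff the top performance bid fails its threshold, $\nu_\iota m_{\iota,(1)} \leq \alpha\max\{\nu_\iota m_{\iota,(2)},\,b^\emph{AR}_0\}$. The crux is that the binding comparison is between performance bids that share the common factor $\nu_\iota$, so after cancellation the allocation event reduces to $m_{\iota,(1)} \leq \alpha\, m_{\iota,(2)}$, a function of the match qualities alone; note that the cost-per-time factor $T^\emph{total}_{i,j,k}$ enters only the payments, not the allocation indicators, and so cannot reintroduce $\nu_\iota$-dependence. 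Under the hypothesis that $m$ is independent of $\nu$, this event is independent of $\nu_\iota$, which is exactly the statement that $z^\emph{AR}_0$ is adverse-selection-free.

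The main obstacle I anticipate is the adverse-selection-free step, specifically handling the brand's committed bid $b^\emph{AR}_0$: one must argue that the comparison which actually determines the brand's win is scale-invariant in $\nu_\iota$ --- either because the relevant binding term is $\alpha\,\nu_\iota m_{\iota,(2)}$, or because the brand's optimal contract scales appropriately under the efficient scoring rule of Definition~\ref{definition2}. Verifying that $b^\emph{AR}_0$ does not smuggle $\nu_\iota$ back into $z^\emph{AR}_0$, and confirming that the cost-per-time weighting leaves the allocation indicators untouched, is where the argument needs the most care; the strategy-proofness parts are otherwise routine adaptations of the Vickrey and Che~\cite{che1993design} arguments.
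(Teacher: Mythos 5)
Your proposal is correct and follows essentially the same route as the paper's proof: dominant-strategy truthfulness in each submarket via the critical payments $\max\{\Phi^\emph{syn}_{-i}\}+\phi(\eta_i)$ (after collapsing the two-dimensional AV bid through Proposition~\ref{proposition2}) and $\alpha\max\{b^\emph{AR}_{-k}\}$, false-name-proofness from the fact that these critical payments depend only on the largest competing bid, and adverse-selection-freeness from homogeneity of degree one of the virtual critical payment, i.e., your cancellation of the common factor $\nu_\iota$ reducing the brand allocation to the $\nu$-independent event $m_{\iota,(1)}\leq\alpha m_{\iota,(2)}$. The only notable difference is presentational: the paper checks the scale-invariance by a two-point case analysis ($\nu\in\{1,c\}$ and $\phi\in\{0,\infty\}$) in the style of Arnosti et al., whereas you argue the same invariance directly, and your flagged worry about $b^\emph{AR}_0$ resolves exactly as you suggest, since the brand's contract bid enters only its own payment and not the performance allocation threshold.
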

		\begin{proof}
			Please see Appendix \ref{sec:t1}.
		\end{proof}
		Considering the situation where values are drawn independently from a power law distribution, the pessimistic efficiency of the EPViSA mechanism can be guaranteed as follows.
		\begin{theorem}\label{theorem2}
			Given a deterministic K and the efficient synchronization scoring rule, for any $\gamma>0$, there exists $\alpha \geq 1$ such that the following hold simultaneously:
			\begin{equation}
				\begin{aligned}
					W^\emph{DT} (\mathcal{M}^\emph{EPViSA}_\alpha) = W^\emph{DT} (\mathcal{M}^{*}) ,\\
					W_B^\emph{AR} (\mathcal{M}^\emph{EPViSA}_\alpha) = W_B^\emph{AR} (\mathcal{M}^{*}),\\
					W_P^\emph{AR} (\mathcal{M}^\emph{EPViSA}_\alpha ) \geq 0.885\cdot W_P^\emph{AR}(\mathcal{M}^{*}),\\
					W(\mathcal{M}^\emph{EPViSA}_\alpha) \geq 0.96\cdot W(\mathcal{M}^{*}),
				\end{aligned}
			\end{equation}
			when the recommendation match qualities are i.i.d. drawn from a power law distribution.
		\end{theorem}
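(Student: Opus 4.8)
The plan is to split the four claims into a physical component, a brand component, a performance component, and their aggregate, and to exhibit a \emph{single} scale $\alpha$ that meets all four at once. First I would fix the winning AV $\iota$ chosen in the physical submarket. By Theorem~\ref{theorem1} every bidder reports truthfully, so each AV submits $b_i^\emph{DT}=\nu_i$ with the optimal deadline of Proposition~\ref{proposition2}, and every performance MBP submits $b_k^\emph{AR}=Q_{\iota,k}=\nu_\iota m_{\iota,k}$. The efficient scoring rule of Definition~\ref{definition2} adds to each AV's bid exactly the expected virtual-submarket surplus $\eta^*[\gamma W_B^\emph{AR}+W_P^\emph{AR}]$ that AV induces, so maximizing the score is equivalent to maximizing total social welfare over the choice of physical winner. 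Hence, for \emph{every} $\alpha\geq 1$, the EPViSA physical allocation coincides with that of $\mathcal{M}^{*}$, which immediately gives $W^\emph{DT}(\mathcal{M}^\emph{EPViSA}_\alpha)=W^\emph{DT}(\mathcal{M}^{*})$. This identity is scale-independent and pins down $\iota$ for the rest.

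Next I would fix $\alpha$ through the brand MBP. In the virtual submarket the top performance MBP wins iff $Q_{\iota,(1)}>\alpha\max\{Q_{\iota,(2)},b_0^\emph{AR}\}$ and the brand MBP $0$ wins otherwise; since $Q_{\iota,k}=\nu_\iota m_{\iota,k}$, the brand-versus-best-performance boundary is governed jointly by the brand's best response $b_0^\emph{AR}$ and the scale. The brand's winning probability is continuous and strictly increasing in $\alpha$, ranging from the regime where the brand essentially never wins ($\alpha=1$) to where it always wins ($\alpha=\infty$). By the intermediate value theorem there is a unique $\alpha^{*}$ at which the brand's winning event coincides with the omniscient brand region $\{m_{\iota,(1)}\leq\gamma\mathbb{E}[m_{\iota,0}]\}$. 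Because the same AV $\iota$ and brand quality $m_{\iota,0}$ enter both mechanisms, matching the allocation regions matches the surpluses, yielding $W_B^\emph{AR}(\mathcal{M}^\emph{EPViSA}_{\alpha^{*}})=W_B^\emph{AR}(\mathcal{M}^{*})$. This $\alpha^{*}$ is the one scale used throughout.

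The heart of the argument is the performance bound at this fixed $\alpha^{*}$. A performance MBP is now served only when $m_{\iota,(1)}>\alpha^{*}m_{\iota,(2)}$, whereas the omniscient rule serves it whenever $m_{\iota,(1)}>\gamma\mathbb{E}[m_{\iota,0}]$; the loss is exactly the top qualities that clear the omniscient threshold but are screened out by the scale. I would substitute the power-law form of $F_{i,k}$, compute the joint density of the top two order statistics $m_{\iota,(1)},m_{\iota,(2)}$ in closed form, and write both $W_P^\emph{AR}(\mathcal{M}^\emph{EPViSA}_{\alpha^{*}})$ and $W_P^\emph{AR}(\mathcal{M}^{*})$ as explicit integrals. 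Dividing, the dependence on $\nu_\iota$ cancels and the worst case reduces to a low-dimensional minimization over the power-law tail parameter and $\gamma$ (which together fix $\alpha^{*}$). I expect this minimization to attain its infimum $0.885$; verifying that this is the global minimum, via a monotonicity/convexity analysis of the integral ratio rather than a routine evaluation, is the main obstacle.

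Finally I would assemble the aggregate. Writing $W=W^\emph{DT}+\eta^*[\gamma W_B^\emph{AR}+W_P^\emph{AR}]$ and using that the physical and brand components are recovered exactly while only the performance component is discounted, the overall ratio is
\begin{equation}
\frac{W(\mathcal{M}^\emph{EPViSA}_{\alpha^{*}})}{W(\mathcal{M}^{*})}=\frac{W^\emph{DT}(\mathcal{M}^{*})+\eta^*\gamma W_B^\emph{AR}(\mathcal{M}^{*})+\eta^*W_P^\emph{AR}(\mathcal{M}^\emph{EPViSA}_{\alpha^{*}})}{W^\emph{DT}(\mathcal{M}^{*})+\eta^*\gamma W_B^\emph{AR}(\mathcal{M}^{*})+\eta^*W_P^\emph{AR}(\mathcal{M}^{*})}.
\end{equation}
Since the exactly-recovered physical surplus only helps, the worst case of this ratio is controlled by the virtual part alone, so it suffices to lower-bound $(\gamma W_B^\emph{AR,*}+W_P^\emph{AR}(\alpha^{*}))/(\gamma W_B^\emph{AR,*}+W_P^\emph{AR,*})$. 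The brand-matching condition ties $\alpha^{*}$ to $\gamma$, so the configuration that drives the performance ratio down to $0.885$ simultaneously forces a strictly positive brand surplus $\gamma W_B^\emph{AR,*}$ that pads the denominator; I would bound this padding from below with the same power-law computation and conclude the virtual, and hence overall, ratio is at least $0.96$. The remaining task is to confirm that the worst case of the aggregate ratio is consistently controlled by the performance worst case identified above, so that the two constants $0.885$ and $0.96$ are attained under the single scale $\alpha^{*}$.
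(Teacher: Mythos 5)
Your skeleton matches the paper's: exact physical recovery via the efficient scoring rule, a choice of $\alpha$ that equates the brand allocation with the omniscient one, a power-law computation for the performance ratio, and an assembly step. But two of your steps have genuine gaps. First, the brand step: there is no $\alpha$ for which the EPViSA brand-winning event $\{m_{\iota,(1)}\leq\alpha m_{\iota,(2)}\}$ \emph{coincides} with the omniscient brand region $\{m_{\iota,(1)}\leq\gamma\mathbb{E}[m_{\iota,0}]\}$ --- one event is measurable with respect to the ratio of the top two order statistics, the other with respect to the level of the top one, and as sets they differ for every $\alpha$. What can be matched is only the winning \emph{probability}; the reason probability-matching suffices for $W_B^\emph{AR}$ equality is that $m_{\iota,0}$ is drawn independently of the performance qualities, so both winning events are independent of the brand's realized quality and the brand receives a representative sample of positions under either rule. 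The paper makes exactly this argument, using the Pareto identity $\Pr\left(m_{\iota,(1)}/m_{\iota,(2)}>\alpha\right)=\alpha^{-a}$ to set $1-\alpha^{-a}=\lambda$, where $\lambda$ is the omniscient brand-winning probability. Your intermediate-value phrasing lands on the right $\alpha$, but the stated justification ``matching the allocation regions matches the surpluses'' is false as written and needs this independence repair.

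Second, and more seriously, the two constants are anticipated rather than proved: you defer the $0.885$ bound to ``the main obstacle'' and the $0.96$ bound to ``the remaining task.'' The paper's proof turns on a structural fact about power laws that your outline never invokes: for i.i.d.\ Pareto qualities the ratio $m_{\iota,(1)}/m_{\iota,(2)}$ is independent of $m_{\iota,(2)}$, which collapses the performance surplus to the closed form $W_P^\emph{AR}(\mathcal{M}^\emph{EPViSA}_\alpha)=\alpha^{1-a}\,\mathbb{E}[Q_{\iota,(1)}]$, with no order-statistic integrals over a joint density required. With that formula, the worst case of the performance ratio is located in the limit $\lambda\to 1$ (brand almost always wins), where L'H\^opital's rule gives the limit $\Gamma(2-1/a)$, and since $a>1$ implies $2-1/a\in(1,2)$, the minimum of the Gamma function on $(1,2)$ (about $0.8856$) yields the $0.885$ bound uniformly --- not a minimization over $\gamma$ as you conjecture. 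For the aggregate, the paper computes the $K\to\infty$ normalized welfares, $1+\lambda\log(1/\lambda)^{-1/a}+\Gamma(1-1/a)(1-\lambda)^{1-1/a}$ for EPViSA against $1+\lambda\log(1/\lambda)^{-1/a}+\int_\lambda^1\log(1/x)^{-1/a}\,dx$ for the benchmark, and minimizes their ratio over $(\lambda,1/a)\in(0,1)^2$ to obtain $0.960$. Your qualitative point that the exactly-recovered brand surplus ``pads the denominator'' is the correct intuition for why the aggregate constant exceeds the performance constant, but without the closed form for $W_P^\emph{AR}$ and the explicit limiting expressions, neither numerical constant is reachable from your outline.
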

		\begin{proof}
			Please see Appendix \ref{sec:t2}.
		\end{proof}
		Theorem \ref{theorem2} ensures that there exists a proper selection of $\alpha$,\footnote{When the elements in $H$ are drawn from the power law distribution, the best choice is $\alpha = \max(1,\gamma \mathbb{E}[Q_{\iota,0}]/\mathbb{E}[Q_{\iota,(2)}])$, for the winning AV $\iota$.} which can achieve at least $96\%$ of the total social welfare compared with the omniscient mechanism.
		\section{Experimental and Simulation Results}\label{sec:results}
		
		\begin{table}[!]
			\centering
			\caption{Setting of parameter values}
			\label{tab:parameter}
			
			\begin{tabular}{l||c}
				\toprule
				
				Parameter     & Value \\ \hline
				Number of RSUs $J$ & 1    \\ 
				Number of AVs $I$ & 30    \\ 
				Number of performance MBPs $K$ & 30    \\ 
				Number of brand MBPs         &    1   \\
				Downlink bandwidth $B^d$         &    20 MHz   \\ 
				Uplink bandwidth $B^u$         &    20 MHz   \\ 
				Channel gain $g$         &    $U$[0, 1]   \\ 
				Transmission power $p$ of AV         &    $U$[0, 1] mW   \\ 
				Transmission power  of RSU         &    $U$[0, 10] mW   \\ 
				Additive white Gaussian noise $\sigma$ & $\mathcal{N}(0,1)$\\
				CPU frequency of RSU & 3.6 GHz\\
				GPU frequency of RSU & 19 GHz\\
				Size of DT tasks $s^\emph{DT}$         &    $U$[0, 1] MB   \\ 
				CPU cycles per unit DT data $e^\emph{DT}$         &    $U$[0, 1] Gcycles/MB   \\ 
				Deadline of DT tasks $\eta^\emph{DT}$         &    $U$[0.9, 1.1]  \\
				Size of AR layers $e^\emph{AR}$         &    $U$[0, 1] MB   \\
				GPU cycles per unit AR data $e^\emph{AR}$         &    $U$[0, 1] Gcycles/MB \\ 
				Valuation of DT tasks $\nu$         &    $U$[0, 1]\\
				Recommendation match qualities  $m$ &   $Zipf$(2)\\ 
				Relative bargaining power $\gamma$ & 1\\
				\bottomrule
			\end{tabular}%
		\end{table}
		In this section, we implement the proposed mechanisms and evaluate their performance under different system parameters and various requirements. In addition, we collect real-world eye-tracking data samples by developing a testbed of the vehicular Metaverse and investigating the proposed mechanisms based on the collected data.
		
		\subsection{System Settings}
		In the simulation setup, we consider a synchronization system with parameters listed in Table \ref{tab:parameter}, where $U$ denotes the uniform distribution and $Zipf$ denotes the Zipf distribution. Each simulation is conducted for 10,000 randomly generated system and market scenarios to demonstrate the expected results.
		Following~\cite{arnosti2016adverse}, the price scaling factor is chosen as $\alpha_\iota = \max{(1,\gamma\mathbb[Q_{\iota,0}]/\mathbb[E][Q_{\iota,(2)}])}$, where $\iota$ is the synchronizing AV in the P2V synchronization. The simulation testbed of the vehicular Metaverse is developed based on a 3D model of several city blocks in New York City. The model was constructed by Geopipe, Inc., by using AI to create a digital twin from images taken throughout the city. From there, we simulate an autonomous car driving through a road, with artificially placed highway advertisements on the sides of the road. Human subjects were placed through the simulation, and eye-tracking data was monitored through the HMD Eyes addon produced by Pupil Labs. Afterward, subjects filled out a survey asking them to rate their interest in each ad.
		\begin{figure*}[!]
		\vspace{-0.5cm}
			\centering
			\subfigure[Social welfare v.s. number of AVs.]{\includegraphics[width=0.32\linewidth]{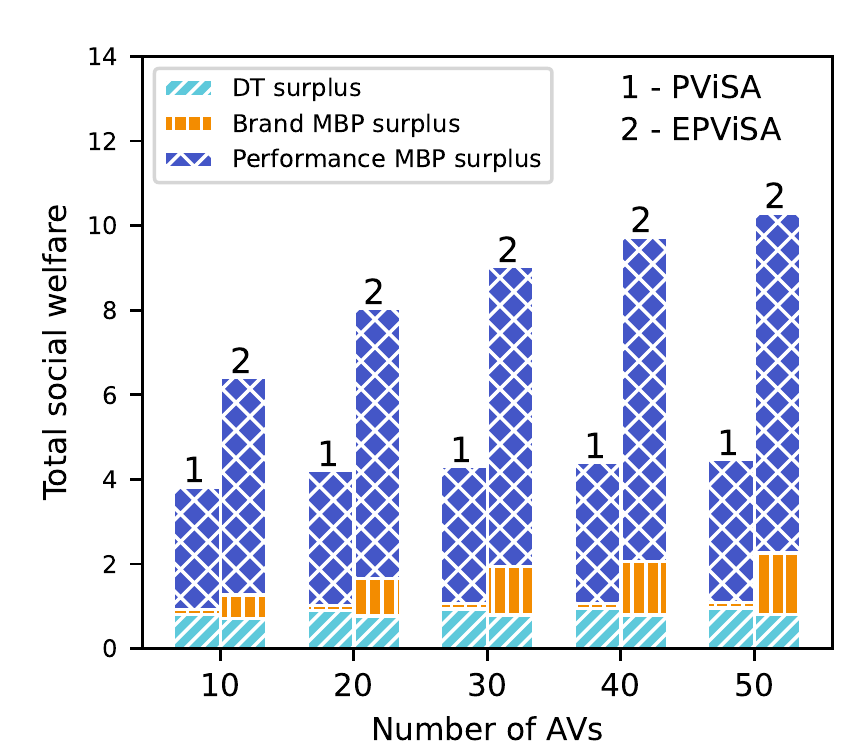}%
				\label{VMAF_SW}}
			\subfigure[Social welfare v.s. number of performance MBPs.]{\includegraphics[width=0.32\linewidth]{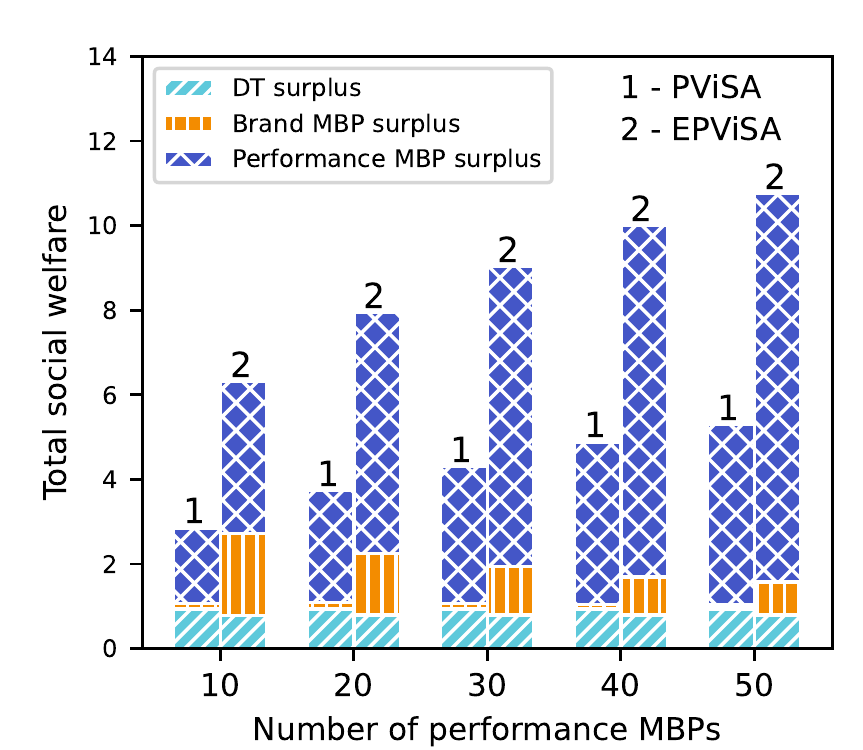}%
				\label{VMAF_BB}}
			\subfigure[Social welfare v.s. size of preference caches.]{\includegraphics[width=0.32\linewidth]{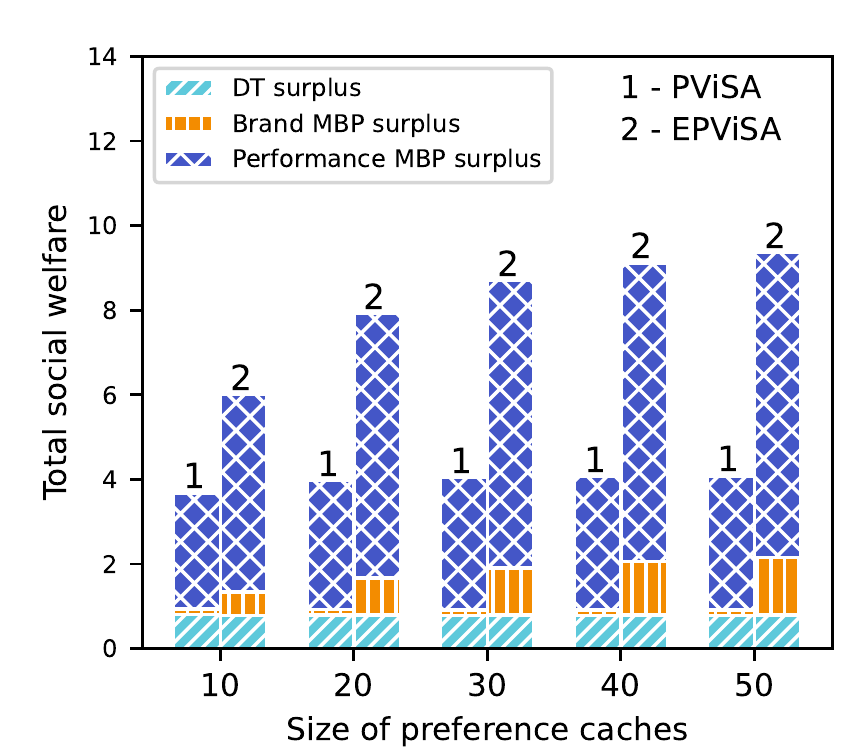}%
				\label{VMAF_BS}}
			\caption{Performance evaluation under different sizes of synchronization market and sizes of preference caches.}
			\label{fig:sw}
		\end{figure*}
		\begin{figure*}[!]
		\vspace{-0.4cm}
			\centering
			\includegraphics[width = 1\linewidth]{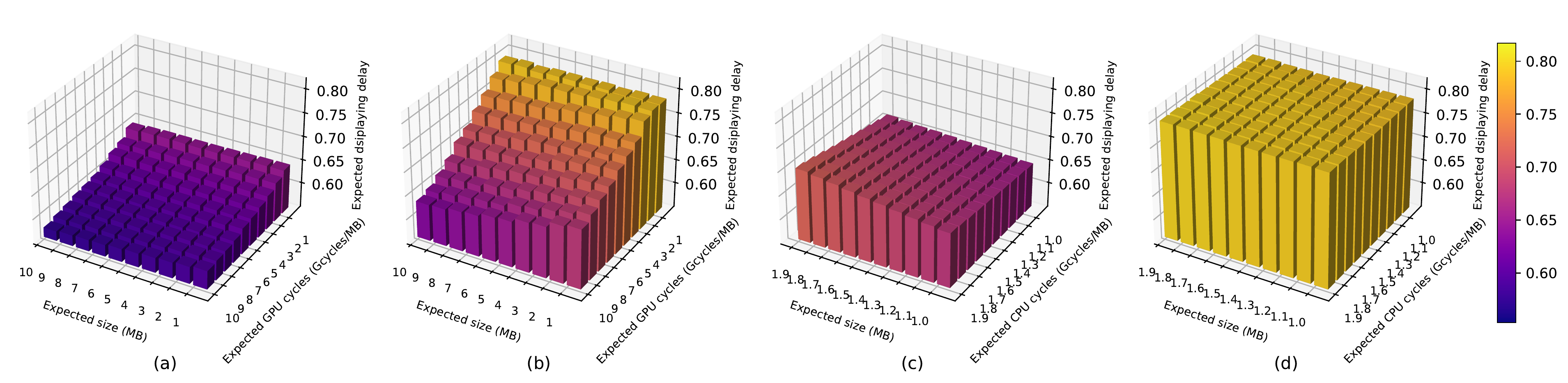}
			\caption{Expected displaying duration under different DT and AR requirements: (a) AR requirements - PViSA; (b) AR requirements - EPViSA; (c) DT requirements - PViSA; (d) DT requirements - EPViSA.}
			\label{fig:3d}
		\end{figure*}
		\subsection{Performance Evaluation under Different Settings}
		In Fig. \ref{fig:sw}, we evaluate the performance of the proposed synchronization mechanisms in synchronization markets with different market parameters, i.e., the number of AVs, the number of MBPs, and the size of preference caches. As we can observe from Fig. \ref{fig:sw}(a) and Fig. \ref{fig:sw}(b), as the market size increases, the total social welfare achieved by both mechanisms increases. The reason is that RSUs can allocate and match the synchronizing AV and MBP pairs from more potential bidders with the high competition when the market size is large. By reducing the asymmetric information among brand MBPs and performance MBPs, the brand surplus achieved by the EPViSA mechanism is always higher than that of the PViSA mechanism under various market settings. However, when the number of AVs increases, the surplus obtained by brand MBPs in the EPViSA mechanism rises and decreases when the number of MBPs increases. As the number of MBPs grows, the competition in the virtual market becomes more intensive, and thus motivates more performance MBPs to submit their bids and win the synchronization services. The massive growth in performance surplus during increased MBPs can also support this claim. As illustrated in Fig. \ref{fig:sw}(c), as the size of the preference caches increases, the overall social welfare of the EPViSA mechanism rises. In contrast, the performance of the PViSA mechanism is not affected by the size of the preference caches. Overall, the EPViSA mechanism can achieve at least two times the total social welfare of the PViSA mechanism in the synchronization market, especially in the surplus of brand MBPs.
		\begin{figure}[!]
			\centering
			\includegraphics[width=1\linewidth]{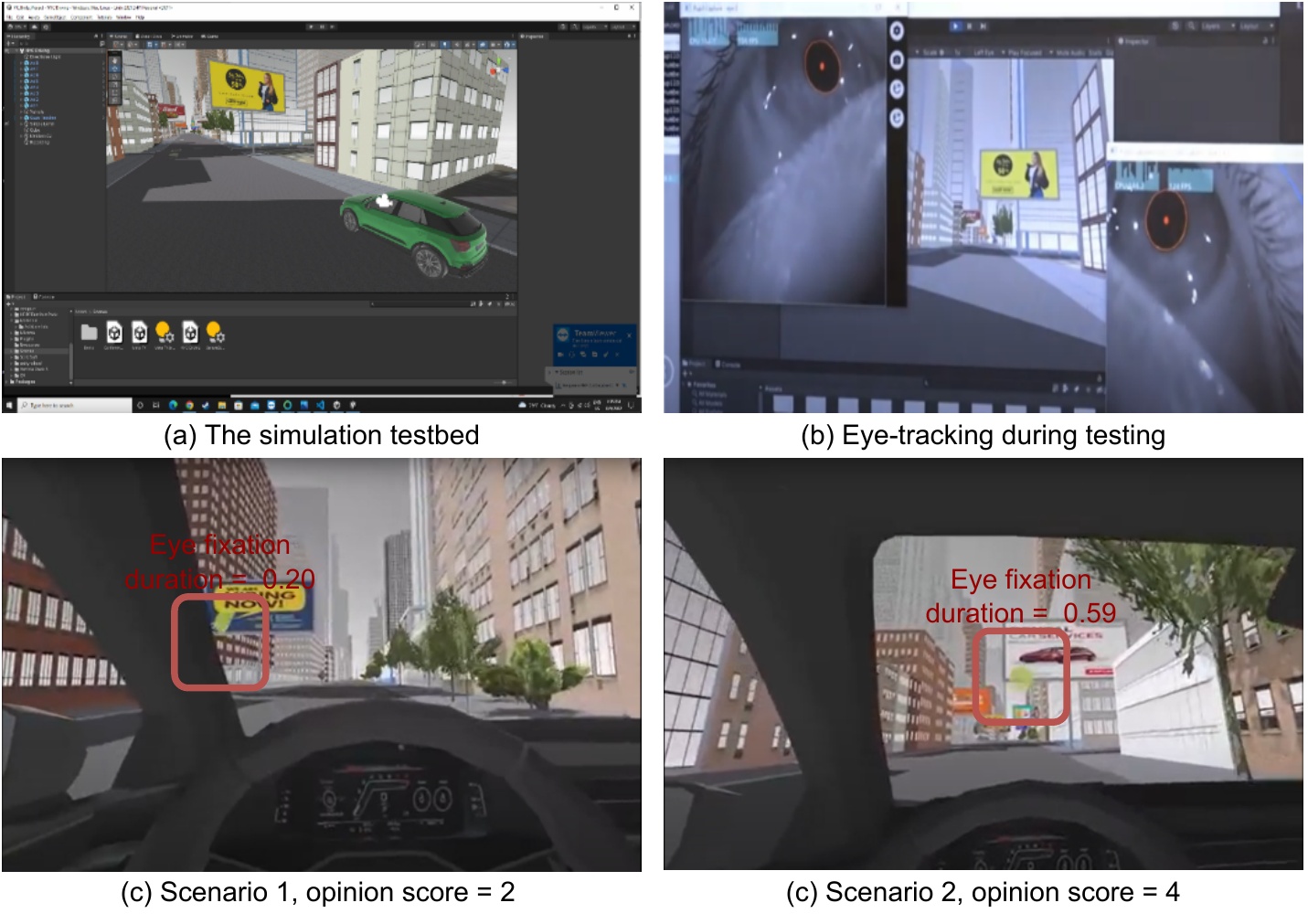}
			\caption{Example screenshots of simulation testbed of the vehicular Metaverse (https://youtu.be/SqhvgRzc5bQ).}
			\label{fig:testbed}
		\end{figure}
		
				\begin{figure*}[t]
				\vspace{-0.5cm}
			\centering
			\subfigure[Social welfare v.s. number of AVs.]{\includegraphics[width=0.32\linewidth]{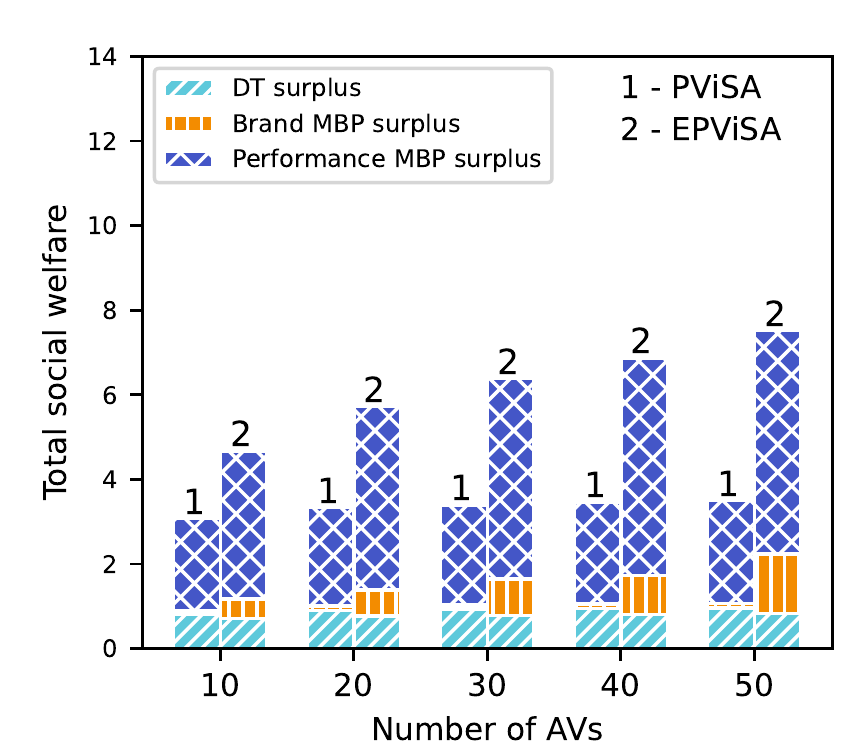}%
				\label{VMAF_SW}}
			\subfigure[Social welfare v.s. number of performance MBPs.]{\includegraphics[width=0.32\linewidth]{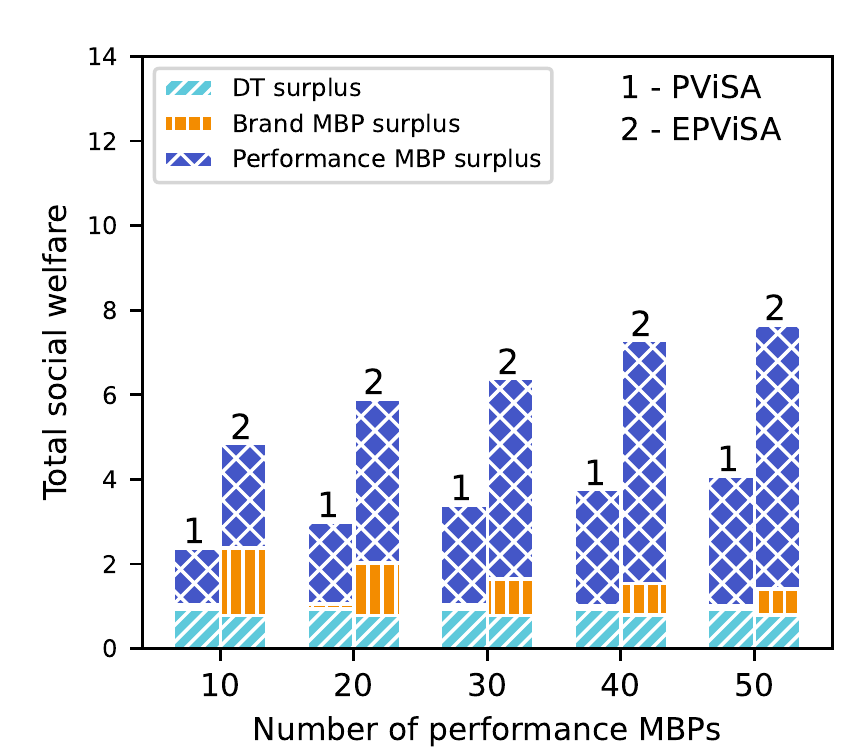}%
				\label{VMAF_BB}}
			\subfigure[Social welfare v.s. size of preference caches.]{\includegraphics[width=0.32\linewidth]{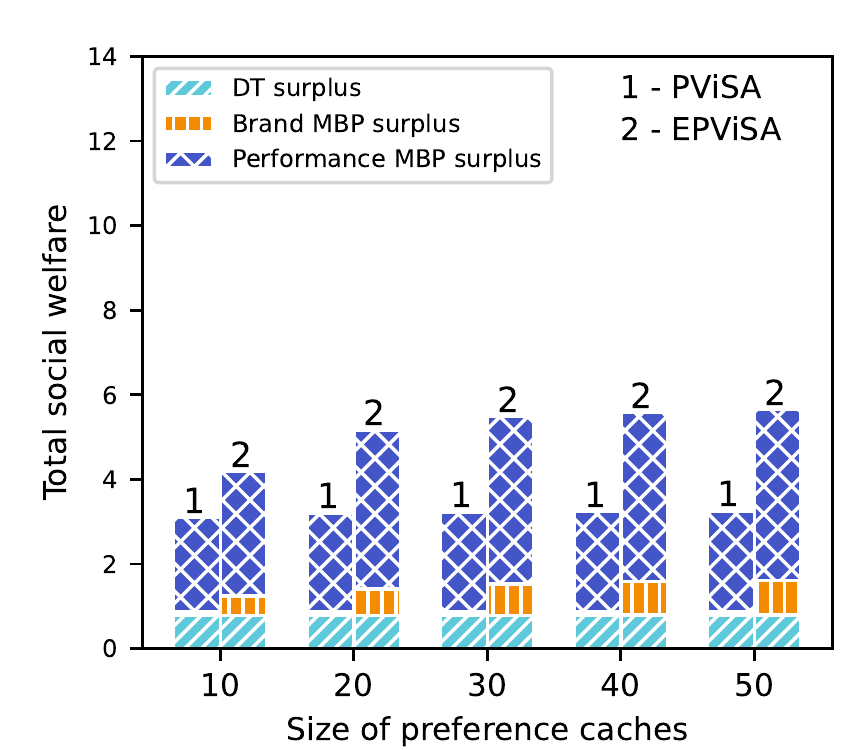}%
				\label{VMAF_BS}}
			\caption{Performance evaluation on real-world data samples under different sizes of synchronization market and sizes of preference caches.}
			\label{fig:real_sw}
		\end{figure*}
		
		\begin{figure*}[!]
		\vspace{-0.4cm}
			\centering
			\includegraphics[width = 1\linewidth]{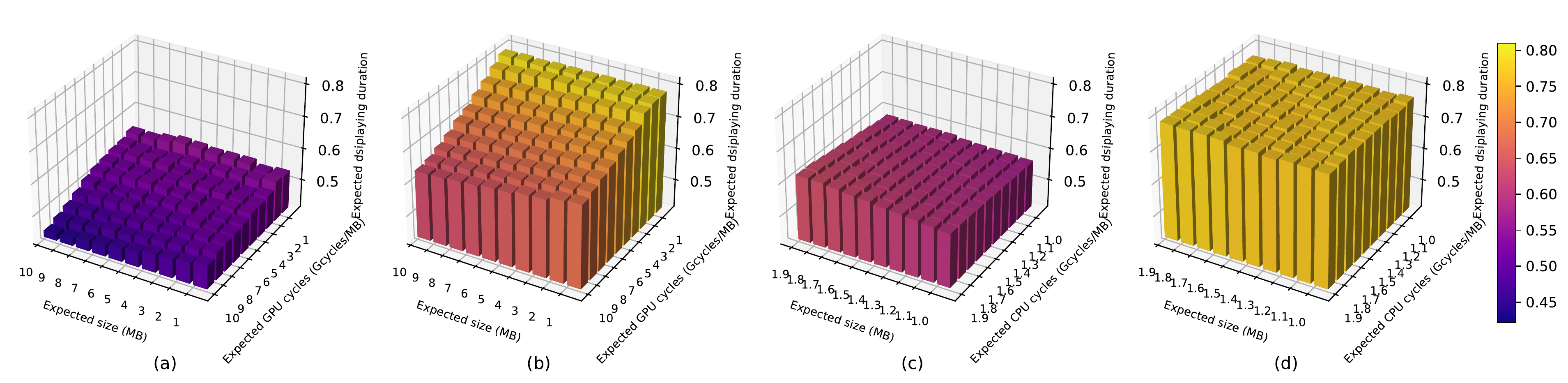}
			\caption{Expected displaying the duration of real-world data samples  under different DT and AR requirements: (a) AR requirements - PViSA; (b) AR requirements - EPViSA; (c) DT requirements - PViSA; (d) DT requirements - EPViSA.}
			\label{fig:real_3d}
		\end{figure*}
		
		\subsection{Expected Displaying Duration under Different DT and AR Requirements}
		To illustrate how the requirements of DT and AR tasks affect the performance of the mechanism, we compare the expected displaying durations achieved by the PViSA and EPViSA mechanisms, as shown in Fig. \ref{fig:3d}. The number of AVs, the number of MBPs, and the number of preference caches are set to 30, 30, and 30, respectively. First, we can observe from Fig. \ref{fig:3d}(a) and Fig. \ref{fig:3d}(b) that the expected displaying duration decreases as the resource requirements of AR tasks become more intensive for both mechanisms. The reason is that as more resources are required for single AR layer rendering, it will be more inefficient for RSUs to provide AR rendering services in the synchronization system. Therefore, RSUs prefer ignoring the AR rendering tasks and focusing on the DT execution services of AVs. To explain further, the efficient synchronization scoring rule can introduce the external information of the virtual submarket during the allocation of synchronizing AV in the physical market, which makes the auctioneer select the AV with the maximum potential surplus. Regarding the physical submarket, the expected displaying durations achieved by the PViSA and EPViSA mechanisms under various DT requirements are shown in Fig. \ref{fig:3d}(c) and Fig. \ref{fig:3d}(d). As we can observe in Fig. \ref{fig:3d}(d), by eliminating the externalities from the virtual submarket, the EPViSA mechanism becomes stable under various DT requirements. In contrast, the PViSA mechanism is affected by the externalities of the virtual submarket. Therefore, its expected displaying duration is lower and more unstable than the EPViSA mechanism. However, the displaying duration in the simulated experiments is only limited by the DT deadline requirements of AVs. In the next part, we will implement a real-world testbed to collect the eye fixation duration of AR users for evaluating the proposed system and mechanism.
		\subsection{Performance Evaluation in the Simulation Testbed}
		To complement the above simulation experiments, we implement a real-world testbed of the vehicular Metaverse in this paper, as shown in Fig.~\ref{fig:testbed}. In this testbed, we use eye-tracking devices to track the Metaverse users' gaze models and record their gaze points and eye fixation duration during testing. In contrast to the ideal experiment, where the display duration does not correspond to the execution delay, the duration of eye fixation that the user captures in the simulation testbed is used to constrain the display duration in the vehicular Metaverse. The opinion score of users given after they finished the testing is leveraged as the input to the power law function of matched preference caches with a base of two. Using these data samples, we retest the performance of the proposed mechanism and plot it in Fig.~\ref{fig:real_sw} and Fig.~\ref{fig:real_3d}. In Fig.~\ref{fig:real_sw}, it can be observed that the total social welfare that can be achieved by the proposed mechanisms is reduced to different degrees. Moreover, the difference in performance between the proposed mechanisms is reduced from three times to twice the original simulation data. In Fig.~\ref{fig:real_3d}, we can see that the trend of implementing the proposed mechanism on different DT and AR requirements does not change significantly compared to the simulation data. However, unlike the idealization of the simulation data, the real data is often more complex. Therefore, the performance of the two mechanisms in Fig.~\ref{fig:real_3d} is not as smooth as that in Fig.~\ref{fig:3d}.

		\section{Conclusions}\label{sec:conclusions}
		In this paper, we have studied the real-time synchronization system and mechanism design problem for the physical-virtual synchronization market of the vehicular Metaverse. We have proposed a real-time synchronization system where physical entities, i.e., AVs, and virtual entities, i.e., MBPs, continuously interact with each other. To address the adverse selection problem, we have proposed the EPViSA mechanism to allocate and match synchronizing pairs of AVs and MBPs. We have provided rigorous analysis and extensive experimental and simulation results to demonstrate the effectiveness and efficiency of the proposed mechanism. Finally, we implement the simulation testbed of the vehicular Metaverse for examining the performance of the proposed mechanisms with real-world data samples.



		\appendix
		\subsection{Proof of Proposition \ref{proposition1}} \label{appendix:a}
		\begin{proof}
			It is well-known that the efficiency of the second-price auction is one with symmetric information. Then, we demonstrate that second-price auctions cannot ensure $(\frac{1}{2} +
			\epsilon)W(\mathcal{M}^*)$, for any $\epsilon > 0$. Fix $I \geq 2$, $K \geq 2$, and $\epsilon > 0$.
			With the synchronizing AV $\iota$, we consider that $m_{\iota,i}$ are i.i.d. drawn from a power law distribution with parameter $a$, which keep the identity $\gamma \mu = \gamma \mathbb{E} [m_{\iota,0}] =
			(1 + \epsilon) E [m_{\iota,(1)}]$. As $a \rightarrow 1^+$, it provides possibility to
			capture nearly all of the surplus from performance MBPs by assigning them a vanishingly small ratio of recommendation positions. Thus, we have
			\begin{equation}
				\begin{aligned}
					\lim_{a \rightarrow 1^+}   \dfrac{W(\mathcal{M}^*)}{\mathbb{E} [\nu]
						\mathbb{E} [m_{\iota,(1)}]} & = \lim_{a \rightarrow 1^+} \frac{\mathbb{E}
						[\max (\gamma \nu, m_{\iota,(1)})]}{\mathbb{E} [m_{\iota,(1)}]}\\
					& \rightarrow \frac{\gamma \mu +\mathbb{E} [m_{\iota,(1)}]}{\mathbb{E}
						[m_{\iota,(1)}]}\\
					& = 2 + \epsilon.
				\end{aligned}
			\end{equation}
			
			The point here is that by choosing $a$ sufficiently close to one, we have
			$W(\mathcal{M}^*) > 2\mathbb{E} [m_{\iota,(1)}] \mathbb{E} [\nu_\iota]$, where $\nu_\iota$ is
			drawn from a power law distribution with parameter $a' .$ If $a'$ is
			sufficiently close to one, then $\sup_{b_0^\emph{AR}} (\mathcal{M}^\emph{PViSA}_{b_0^\emph{AR}}) = \gamma \mu
			\mathbb{E} [\nu\iota] = (1 + \epsilon) \mathbb{E} [m_{\iota,(1)}] \mathbb{E} [\nu_\iota] .$
			It follows that $\sup_{b_0^\emph{AR}} W(\mathcal{M}^\emph{PViSA}_{b_0^\emph{AR}}) / W(\mathcal{M}^*) < \frac{1 +
				\epsilon}{2}$.
		\end{proof}
		\subsection{Proof of Proposition \ref{proposition2}} \label{sec:p2}
		\begin{proof}
			To prove this proposition, we need to follow the idea that for any physical bid
			$(\bar{b}_i^{\tmop{DT}}, \bar{\eta} _i)$, there always exists a physical bid
			$(\hat{b}_i^{\tmop{DT}}, \hat{\eta}_i)$ that results in an expected utility
			of physical bidder $i$ that is no lower than that of the bid $\left(
			\bar{b}_{i \text{}}^{\tmop{DT}}, \bar{\eta} _i \right)$. On the hand,
			the deadline $\hat{\eta}_i$ is calculated according to \eqref{eq:deadline}. On the other
			hand, the bidding price $\hat{b}_i^{\tmop{DT}}$ is selected such that $\Phi^\emph{syn}
			(\hat{b}_i^{\tmop{DT}}, \hat{\eta}_i) = \Phi^\emph{syn} (\bar{b}_i^{\tmop{DT}}, \bar{\eta}
			_i)$, which means that these physical bids $(\bar{b}_i^{\tmop{DT}},
			\bar{\eta} _i)$ and $\left( \bar{b}_{i \text{}}^{\tmop{DT}}, \bar{\eta} _i
			\right)$ result in the same score as well as the same allocation probability.
			In the case of the bidder loses, the utility with be zero if bidder $i$ loses by
			obtaining this score. In the other case of bidder wins, and the bid
			$(\hat{b}_i^{\tmop{DT}}, \hat{\eta}_i)$ introduces higher or equal utility
			that obtained by submitting other bids, i.e.,
			\begin{equation}
				\begin{aligned}
					\nu_i - \hat{b}_i^{\tmop{DT}} - &(\max \{ \Phi^\emph{syn}_{\mathcal{I} / \{ i \}} \} + \phi
					(\hat{\eta} _i)) \geq \\&\nu_i - \bar{b}_i^{\tmop{DT}} - (\max \{
					\Phi^\emph{syn}_{\mathcal{I} / \{ i \}} \} + \phi (\bar{\eta} _i)) .
				\end{aligned}
			\end{equation}
			This holds as the deadline is selected by calculating \eqref{eq:deadline}.
		\end{proof}
		\subsection{Proof of Proposition \ref{proposition3}} \label{sec:p3}
		\begin{proof}
			Under the efficient synchronization scoring rule, the allocated synchronizing AV in the physical submarket can maximize the expected surplus in the synchronization market. Then, consider that $\alpha = 1$ and $\alpha = \infty$ indicate the cases where the
			brand MBP always loses or never loses, respectively. Therefore, we have $W
			(\mathcal{M}_1^{\tmop{EPViSA}}) =\mathbb{E} [Q_{\iota, (1)}]$ and $W
			(\mathcal{M}_{\infty}^{\tmop{EPViSA}}) = \gamma \mathbb{E} [Q_{i, 0}]$.
			Furthermore, it follows that
			\begin{equation}
				\begin{aligned}
					W (\mathcal{M^{*}})  & = \mathbb{E} [\max (\gamma \nu
					\mathbb{E} [m_{\iota,0}]), \nu m_{\iota,(1)}]\\
					& \leq \gamma \mathbb{E} [Q_{\iota,0}] + \mathbb{E} [Q_{\iota,(1)}]\\
					& \leq 2 \cdot \max (\gamma \mathbb{E} [Q_{\iota,0}], \mathbb{E} [Q_{\iota,(1)}]),
				\end{aligned}
			\end{equation}
			which proves this claim.
		\end{proof}
		\subsection{Proof of Theorem \ref{theorem1}} \label{sec:t1}
		\begin{proof}
			Any mechanism is strategy-proof if and only if this
			the mechanism can be characterized by a critical payment function $\psi$, such
			that bidder $n$ wins if and only if its bid $b_n$ is higher than the
			threshold price $\psi (b_{- n})$ under other competing bids $b_{- n}$. After bidder $n$ wins the auction, the payment charged by the auctioneer
			is the required payment $\psi$. To prove that the EPViSA mechanism is fully
			strategy-proof in the synchronization market, we need to find the
			critical payment functions of the EPViSA mechanism for the physical
			and the virtual submarket, respectively. First, we prove the payment
			$\psi_{\tmop{phy}} (b^\emph{DT}_{- i})$ of EPViSA is critical in the physical
			submarket. Given any deadline $\eta_i$, a bidder $i$ in the physical submarket
			will be scored by $\Phi^{\tmop{syn}} (\nu_i) - \phi (\eta_i)$ if it submits
			a truthful bid. We need to show that bidder $i$ cannot
			manipulate its bid to obtain a higher utility. In the case of the bidder
			losing by submitting an untruthful bid $b_i' \neq b^\emph{DT}_i$, the utility of bidder
			$i$ will be zero. On the contrary, when bidder $i$ wins by submitting an
			untruthful bid, the expected payoff can be expressed as
			\begin{equation}
				\begin{aligned}
					U^{\tmop{DT}}_i & = \nu_i - p_i'\\
					& = \nu_i - (\max \{ \Phi^{\tmop{syn}}_{- i} \} + \phi (\eta_i))\\
					& = \Phi^{\tmop{syn}}_i - \max \{ \Phi^{\tmop{syn}}_{- i} \},
				\end{aligned}
			\end{equation}
			where $\max \{ \Phi^{\tmop{syn}}_{- i} \}$ is the highest score without
			considering the bid of bidder $i$. Therefore, whether bidder $i$ wins or loses
			under both $\Phi'$ and $\Phi^m$, the utilities it perceived are all lower
			or equal to the utility when it submits its truthful bid. The
			critical payment function in the physical submarket can be expressed as
			$\psi_{\tmop{phy}} (b^{\tmop{DT}}_i, \eta_i ) = \max \{ \Phi^{\tmop{syn}}_{-
				i} \} + \phi (\eta_i).$ Moreover, the auctioneer needs to calculate the
			synchronization scores for bidders in the physical bidder. Therefore, all
			bidders are proof of false-name biding.
			
			Then, the critical payment function of the EPViSA mechanism in the virtual
			submarket is $\psi_{\tmop{vir}}  (b^{\tmop{AR}}_{- k}) = \alpha \max \{
			b^{\tmop{AR}}_{- k} \}$, for $\alpha \geq 1$. With the critical payment
			function in the virtual submarket, the top performance bidder can win by demands
			that $\psi_{\tmop{vir}}  (b^{\tmop{AR}}_{- k}) \geq \max \{ b^{\tmop{AR}}_{-
				k} \}$. Moreover, the mechanism is false-name proof in the virtual
			submarket, if $\psi_{\tmop{vir}}  (b^{\tmop{AR}}_{- k}) = \psi_{\tmop{vir}} 
			(\max \{ b^{\tmop{AR}}_{- k} \})$. Let there be a set of bids $b_{- k}$ that
			makes $\psi_{\tmop{vir}}  (b^{\tmop{AR}}_{- k}) \neq \psi_{\tmop{vir}} 
			(\max \{ b^{\tmop{AR}}_{- k} \})$, and there are two bidders in the virtual
			submarket. One bidder with valuation higher than $\psi_{\tmop{vir}}  (b_{- k})$
			and the other with valuation $\max \{ b^{\tmop{AR}}_{- k} \} .$ If
			$\psi_{\tmop{vir}}  (b^{\tmop{AR}}_{- k}) < \psi_{\tmop{vir}}  (\max \{
			b^{\tmop{AR}}_{- k} \})$, then the first bidder can submit a lower price
			while maintaining the other bids in the set of bids $b_{- k}$. This way,
			the mechanism is not winner false-name proof. On the contrary, if
			$\psi_{\tmop{vir}}  (b^{\tmop{AR}}_{- k}) > \psi_{\tmop{vir}}  (\max \{
			b^{\tmop{AR}}_{- k} \})$ then the losing bidder in the virtual submarket can
			submit a higher bid compared with the winner's bid while maintaining the
			other bids in the set of bids $b^{\tmop{AR}}_{- k}$. Therefore, the
			the mechanism in the virtual submarket is loser false-name proof.
			
			To demonstrate that our mechanism is adverse selection free, the physical
			critical payment function is quasi-linear, and the virtual critical payment
			the function is homogeneous of degree one. In the physical submarket, we consider
			two types of external effects from the virtual submarket $\eta \in \{ 0, \infty
			\}$ with $\Pr (\phi = 0) \in (0, 1)$, while the set of other bidding price
			$\nu_{- i}$ maintain the same. When $\phi = 0$, there is no external
			effects from the virtual submarket, and we have $z_{i }^{\tmop{DT}} (\nu + \phi (0) ) =
			z_{i }^{\tmop{DT}} (\nu) = 1_{\nu_i > \max \nu_{- i}} = 1_{\nu_i >
				\psi_{\tmop{phy}} (\nu_{- i}, 0)}$. When $\phi = \sigma, z_{i }^{\tmop{DT}}
			(\nu + \phi (0) ) = z^{\tmop{DT}}_{i } (\nu + \infty) = 1_{\nu_i >
				\psi_{\tmop{phy}} (\nu_{- i}, \phi(\infty))} = 0 $, there is no bidder can
			win in the physical submarket. Therefore, the proposed mechanism in the
			the physical submarket is adverse selection free. When it turns to the virtual
			submarket, suppose that $\nu \in \{ 1, c \}$ with $\Pr (C = 1) \in (0, 1)$
			while maintaining the same set of Metaverse billboard qualities. It can
			be shown that $z^{\tmop{AR}}_{0 } (\nu m) = 1_{\{ \nu = c \}} .$ When $\nu =
			1$, $z^\emph{AR}_i (\nu m) = z^\emph{AR}_i (\nu m) = 1_{\{ m_i > \psi_{\tmop{vir}} (m_{- i}) \}}
			= 1$, and thus $z^{\tmop{AR}}_0 (\tmop{cm}) = 0.$ When $\nu = c, z^\emph{AR}_i (\tmop{cm})
			= 1_{\{ c m_i > \phi_{\tmop{vir}} (c m_{- i}) \}} = 0.$, this implies no \
			performance bidder can win the auction and then $z^\emph{AR}_0 (\tmop{cm}) = 1$.
			
			In summary, we prove that the EPViSA mechanism is fully strategy-proof for bidders in the synchronization market and adverse selection free under the efficient synchronization scoring rule and cost-per-time model.
		\end{proof}
		\subsection{Proof of Theorem \ref{theorem2}} \label{sec:t2}
		\begin{proof}
			We first prove that under the optimal deadline selection in Proposition \ref{proposition2} and the truthful bid report, the EPViSA mechanism is efficient under the efficient synchronization scoring rule, and the synchronizing AV is $\iota$.
			To prove the efficiency of EPViSA in the virtual submarket, we first set the
			common valuation and the expected displaying duration to be identically one, independent of the allocation rule in the virtual market. Moreover, the efficiency of the allocation in
			the virtual submarket is influenced by the following three parameters: (i) the
			number of performance bidders $K$, (ii) the expected match quality of brand MBP $\mu
			=\mathbb{E} [m_{\iota, 0}]$, and (iii) the parameters of the power law tail, $a$. We denote
			that $\lambda$ to be the probability that the brand MBP wins the recommendation
			position under omniscient. Let $\mu (\lambda, K)$ refer to the brand quality
			implied by the provided qualities of $\lambda$ and $K$ (for static $a$), and
			thus $\lambda = \Pr (m_{\iota,(1)} \leq \mu (\lambda, K))$. 
			The total social welfare of the omniscient benchmark can be calculated as
			\begin{equation}
				\begin{aligned}
					W(\mathcal{M}^*) & = 1 + W_B (\mathcal{M}^*) + W_P (\mathcal{M}^*)\\
					& = 1+ \lambda \mu (\lambda, K) + \int^1_{\lambda} \mu (x, K) \tmop{dx} . 
				\end{aligned}
			\end{equation}
			The social welfare obtained by the EPViSA mechanism in the virtual submarket can
			be calculated as
			\begin{equation}
				\begin{aligned}
					W_P (\mathcal{M}^\emph{EPViSA}_{\alpha}) & = \mathbb{E} \left[ Q_{\iota, (2)}  \frac{m_{\iota,(1)}}{m_{\iota,(2)}}
					1_{\frac{m_{\iota,(1)}}{m_{\iota,(2)}}>\alpha}\right]\\
					& = \mathbb{E} [Q_{\iota,(2)}] \mathbb{E} \left[ \frac{m_{\iota,(1)}}{m_{\iota,(2)}} 1_{\frac{m_{\iota,(1)}}{m_{\iota,(2)}}>\alpha}\right]\\
					& = \mathbb{E} [Q_{\iota,(2)}] \mathbb{E} \left[ \frac{m_{\iota,(1)}}{m_{\iota,(2)}} \right] \alpha \textrm{Pr} \left(
					\frac{m_{\iota,(1)}}{m_{\iota,(2)}} > \alpha \right)\\
					& = \mathbb{E} \left[ Q_{\iota,(2)}  \frac{m_{\iota,(1)}}{m_{\iota,(2)}} \right] \alpha \textrm{Pr} \left(
					\frac{m_{\iota,(1)}}{m_{\iota,(2)}} > \alpha \right)\\
					& = \mathbb{E} [Q_{\iota,(1)}] \alpha^{1 - a}.
				\end{aligned}
			\end{equation}
			Therefore, for any $\alpha \geq 1$ and $W_P (\mathcal{M}^\emph{EPViSA}_{\alpha}) =
			\alpha^{1 - a} \mathbb{E} [Q_{\iota, (1)}]$, we have
			\begin{equation}
				\begin{aligned}
					W &(\mathcal{M}^\emph{EPViSA}_{\alpha}) = 1+W_B (\mathcal{M}^\emph{EPViSA}_{\alpha}) + W_P
					(\mathcal{M}^\emph{EPViSA}_{\alpha})\\
					& = 1+ \Pr (m_{\iota,(1)} \leq \alpha m_{\iota,(2)}) \mu (\lambda, K) + W_P
					(\mathcal{M}^\emph{EPViSA}_{\alpha})\\
					& = 1+(1 - \alpha^{- a}) \mu (\lambda, n) + \alpha^{1 - a} \mathbb{E}
					[m_{\iota,(1)}],
				\end{aligned}
			\end{equation}
			where the last line is obtained from the independence of any power law distribution. 
			
			We select the parameter $\alpha$ and then the brand
			MBP is allocated for the synchronization service with probability ${\lambda}$. In specific, we select ${\alpha}$ that ensures $1-{\alpha}^{-a}={\lambda}$. As both allocation rules deliver a representative sample of recommendation positions to the
			brand MBP, the first claim in Theorem 2 follows directly. In detail, our selection of ${\alpha}$ ensures that
			$W_\emph{B}(\mathcal{M}^*)=W_\emph{B}(\mathcal{M}^\emph{EPViSA}_{\alpha})$.
			
			Taking $\lambda \rightarrow 1$ and applying the L'Hosptial's rule, we can obtain
			that
			\begin{equation}
				\begin{aligned}
					\frac{W_P(\mathcal{M}^\emph{EPViSA}_\alpha)}{W_P(\mathcal{M}^{*})} &= \lim_{\lambda \rightarrow 1}   \frac{(1 - \lambda)^{1 - 1 / a} \Gamma (1 -
						1 / a)}{\int_{\lambda}^1 \log (1 / x)^{- 1 / a} \tmop{dx}} \\& = (1 - 1 / a)
					\Gamma (1 - 1 / a) \lim_{\lambda \rightarrow 1} \frac{(1 - \lambda)^{- 1 /
							a}}{\log (1 / \lambda)^{- 1 / a}}\\
					& = \Gamma (2 - 1 / a) .
				\end{aligned}
			\end{equation}
			By employing the fact of Gamma function $\Gamma(\cdot)$, the last line follows from the identity
			$\Gamma (s + 1) = s \Gamma (s)$ and the fact that $\lim_{\lambda \rightarrow
				1} (1 - \lambda) / \log (1 / \lambda) = 1.$ Because $a > 1,$we have $2 - 1 / a
			\in (1, 2) .$ The lowest value of the Gamma function within $(1, 2)$
			is higher than 0.885, finishing the proof of the third statement in Theorem \ref{theorem2}.
			We can conclude that
			\begin{equation}
				\begin{aligned}
					\lim&_{K \rightarrow \infty} K^{- 1 / a} W (\mathcal{M}^\emph{EPViSA}_{\alpha} ; K, \mu
					(\lambda, K)) \\&= 1 + \lambda \log (1 / \lambda)^{- 1 / a} + \Gamma (1 - 1 /
					a)  (1 - \lambda)^{1 - 1 / a} .
				\end{aligned}
			\end{equation}
			\begin{equation}
				\begin{aligned}
					\lim&_{K \rightarrow \infty} K^{- 1 / a} W (\mathcal{M}^* ; K, \mu (\lambda,
					K)) \\& = 1 + \lambda \log (1 / \lambda)^{- 1 / a} + \int_{\lambda}^1 \log (1 /
					x)^{- 1 / a} dx.
				\end{aligned}
			\end{equation}
			As a result, the proportion of these equations is a lower bound on $W(\mathcal{M}^\emph{EPViSA}_\alpha) =
			W(\mathcal{M}^{*})$. The lowest value of this proportion for $\lambda \in (0, 1)$
			and $1 / a \in (0, 1)$ is higher than 0.960 and thus finishing the proof of the last statement.
		\end{proof}

		
		
		%
		\bibliographystyle{IEEEtran}
		\bibliography{bare_conf}

	\end{document}